\newtheorem{theorem}{Theorem}
\newtheorem{proposition}[theorem]{Proposition}
\newtheorem{lemma}[theorem]{Lemma}
\theoremstyle{remark}
\numberwithin{equation}{section}
\begin{document}

\title[Wave functions for integrable particle systems]
{Wave functions for quantum integrable
particle systems via
partial confluences of multivariate hypergeometric functions}

\author{J.F.  van Diejen}

\address{
Instituto de Matem\'atica y F\'{\i}sica, Universidad de Talca,
Casilla 747, Talca, Chile}

\email{diejen@inst-mat.utalca.cl}

\author{E. Emsiz}

\address{
Facultad de Matem\'aticas, Pontificia Universidad Cat\'olica de Chile,
Casilla 306, Correo 22, Santiago, Chile}
\email{eemsiz@mat.uc.cl}

\subjclass[2010]{Primary: 35Q40; Secondary: 33C67, 35C05, 35P10, 39A14, 81Q80, 81R12.}
\keywords{quantum integrable particle models, stationary Schr\"odinger equation, analytic solutions, difference equations in the spectral variable, Calogero-Sutherland system, Toda chain, 
P\"oschl-Teller potential, Morse potential, multivariate (confluent) hypergeometric functions, Whittaker functions, hyperoctahedral symmetry}

\thanks{This work was supported in part by the {\em Fondo Nacional de Desarrollo
Cient\'{\i}fico y Tecnol\'ogico (FONDECYT)} Grants   \#  1170179 and \# 1181046.}

\date{January 2018}

\begin{abstract}
Starting from the hyperoctahedral multivariate hypergeometric function of Heckman and Opdam (associated with the $BC_n$ root system), we arrive---via partial confluent limits in the sense of Oshima and Shimeno---at solutions of the eigenvalue equations for the Toda chain with one-sided boundary perturbations of P\"oschl-Teller type  and for
the hyperbolic Calogero-Sutherland system in a Morse potential.  With the aid of corresponding degenerations of  the (bispectral dual) difference equations for the Heckman-Opdam hyperoctahedral hypergeometric function, it is deduced that the 
eigensolutions in question are holomorphic in the spectral variable.
\end{abstract}

\maketitle

\section{Introduction}\label{sec1}
In this note we construct analytic solutions of the eigenvalue equations for two fundamental quantum integrable particle models, viz.
the quantum Toda Hamiltonian with one-sided boundary perturbations of P\"oschl-Teller type
\begin{align}\label{Lt}
L_x^{\text{t}}= \sum_{1\leq j\leq n}
  \frac{\partial^2}{\partial x_j^2} & - \sum_{1\leq j<n} a_j e^{-x_j+x_{j+1}}   -a_{n-1}e^{-x_{n-1}-x_n} \\
&   -  \frac{\frac{1}{4}g_S(g_S+2g_L-1)}{ \sinh^2 \textstyle{\frac{1}{2} } (x_n) } - \frac{g_L(g_L-1)}{ \sinh^2  (x_n)}  \nonumber
\end{align}
and the hyperbolic quantum Calogero-Sutherland Hamiltonian in a Morse potential
\begin{equation}\label{Lcs}
L_x^{\text{cs}}= \sum_{1\leq j\leq n}
 \Bigl(  \frac{\partial^2}{\partial x_j^2}-  g_S e^{-x_j}  -a_n e^{-2x_j}\Bigr)  
- \sum_{1\leq j<k\leq n}   \frac{\frac{1}{2}g_M(g_M-1)}{\sinh^{2} {\frac{1}{2} }   (x_j-x_k) } .
\end{equation}
In the corresponding stationary Schr\"odinger equations the variables $x_1,\ldots ,x_n$ represent  particle positions along the line and the (possibly complex) constants $g_S,g_M,g_L$ and   $a_1,\ldots ,a_{n}$ denote coupling parameters regulating the strengths of the interactions.

The Toda Hamiltonian $L_x^{\text{t}}$ \eqref{Lt} was first introduced at the level of classical mechanics by Inozemtsev \cite{ino:finite}, who constructed a Lax-pair representation for the corresponding Hamilton flow. The question of its complete integrability was addressed within the framework of the Yang-Baxter equation in Refs. \cite{kuz-jor-chr:new,kuz:separation} (at the classical level) and in Ref.
 \cite{tsi:dynamical} (at the quantum level). Explicit partial differential operators generating a complete algebra of commuting quantum integrals for
 $L_x^{\text{t}}$ \eqref{Lt} were found by Oshima \cite{osh:completely}, whereas in Ref.  \cite{ger-leb-obl:quantum} a conjectural formula for the eigensolutions
 was proposed in the form of an integral representation.
 
For special couplings, the classical dynamics of the Calogero-Sutherland Hamiltonian with a Morse potential was studied by M. Adler  \cite{adl:some}, who determined the complete integrability and the scattering behavior of this particle system by means of a Lax-pair representation. For general couplings, the complete integrability of $L_x^{\text{cs}}$ \eqref{Lcs} was deduced at the classical level in Ref. \cite{woj:integrability} with the aid of a Lax pair and at the quantum level in Ref. \cite{hal:multivariable} by means of a recursive construction of the pertinent commuting partial differential operators.   Eigenstates in the discrete spectrum of $L_x^{\text{cs}}$ \eqref{Lcs}, which become relevant for parameter regimes in which the Morse potential allows for particle binding, were constructed near the ground state in Ref. \cite{ino-mes:discrete} and for arbitrary excitations in Ref. \cite{hal:multivariable}. In fact, it was recognized in the latter work
that the eigenfunctions in question can be understood as a multivariate generalization of classical Bessel polynomials.

Despite all progress reported on these two prominent integrants of the Toda and Calogero-Sutherland integrable families,
the challenge of providing a  mathematically rigorous construction of the eigensolutions at arbitrary spectral values
has withstood time for over three decades now, except in the standard situation of vanishing perturbation parameters (i.e. when $g_S=g_L=a_n=0$). 
More specifically, for $g_S=g_L=0$ the eigenfunctions of 
$L_x^{\text{t}}$ \eqref{Lt} amount to well-studied multivariate Whittaker functions associated with the root system $D_n$ \cite{kos:quantization,has:whittaker,goo-wal:classical,bau-oco:exponential,ger-leb-obl:new,rie:mirror}, while for 
$g_S=a_n=0$  the eigenfunctions of $L_x^{\text{cs}}$ \eqref{Lcs} are expressed in terms of the celebrated Heckman-Opdam multivariate hypergeometric function associated with the root system $A_{n-1}$ \cite{hec-sch:harmonic,saw:eigenfunctions,opd:lecture,hal-rui:recursive}. Our principal goal here is to construct the corresponding solutions of the eigenvalue equations in the situation of nonvanishing
perturbation parameters $g_S, g_L$, $a_n$, and to clear up how the presence of the P\"oschl-Teller and Morse potentials affects the structure of
the multivariate (confluent) hypergeometric functions  expressing the wave functions. In a nutshell, while the perturbations under consideration manifestly preserve the $x_n\to -x_n$ reflection symmetry in $L_x^{\text{t}}$ \eqref{Lt}
and the permutation-symmetry in $L_x^{\text{cs}}$ \eqref{Lcs}, with respect to the spectral variables the $D_n$-symmetry of the Whittaker function and the $A_{n-1}$-symmetry of the hypergeometric function  turn out to arise from breaking a stronger hyperoctahedral  symmetry (stemming from the root system $BC_n$) enjoyed by the wave functions in the presence of the perturbative potentials.

Upon assuming  that $a_j \neq 0$ ($j\in \{ 1,\ldots ,n\}$), we will  exploit  the translational freedom $x_k\to x_k+c_j$
for  $k=1,\ldots ,j$   (with $c_j\in\mathbb{C}$), so as to normalize these auxiliary coupling constants from now on at the fixed values
\begin{equation}\label{norm-coupling}
a_j\equiv 2\ \ (j=1,\ldots,n-1) \quad \text{and}\quad a_n\equiv {\textstyle \frac{1}{4}} 
\end{equation}
(unless explicitly stated otherwise).
It is well-known (cf. e.g. Refs.
\cite{ino-mes:discrete,ino:finite,die:difference,osh:completely} and references therein) that both  $L_x^{\text{t}}$ \eqref{Lt} and $L_x^{\text{cs}}$ \eqref{Lcs}
arise as degenerations  of the following hyperoctahedral  Calogero-Sutherland Hamiltonian associated with the root system $BC_n$
\cite{ols-per:quantum,hec-sch:harmonic}:
\begin{align}\label{Lbc}
L_x^{\text{bc}}= \sum_{1\leq j\leq n}&
 \Bigl(  \frac{\partial^2}{\partial x_j^2}-  \frac{\frac{1}{4}g_S(g_S+2g_L-1)}{ \sinh^2 \textstyle{\frac{1}{2} } (x_j) } - \frac{g_L(g_L-1)}{ \sinh^2  (x_j)}\Bigr)  \\
&- \sum_{1\leq j<k\leq n}  \left( \frac{\frac{1}{2}g_M(g_M-1)}{ \sinh^{2} {\frac{1}{2} }   (x_j+x_k) }+ \frac{\frac{1}{2}g_M(g_M-1)}{\sinh^{2} {\frac{1}{2} }   (x_j-x_k) }\right) .
\nonumber 
\end{align}
Indeed, after a translation of
the position vector
 $x=(x_1,\ldots ,x_n)$ of the form
 \begin{subequations}
\begin{equation}\label{translate-t}
x\to x+c\rho_M,\quad  \rho_M=\sum_{1\leq j\leq n} (n-j)e_j
\end{equation}
and substituting $g_M\to g_M^{(c)}$ with
\begin{equation}\label{rescale-t}
g_M^{(c)}(g_M^{(c)}-1)=e^c \quad (g_M^{(c)}>0) ,
\end{equation}
\end{subequations}
the potential of $L_x^{\text{bc}}$ \eqref{Lbc} tends to that of $L_x^{\text{t}}$ \eqref{Lt}, \eqref{norm-coupling} for $c\to +\infty$.
 Similarly, the coordinate translation
  \begin{subequations}
\begin{equation}\label{translate-cs}
x\to x+c\rho_L,\quad  \rho_L=\sum_{1\leq j\leq n} e_j
\end{equation}
in $L_x^{\text{bc}}$ \eqref{Lbc} accompanied by the  parameter substitution $g_S\to g_S^{(c)}$, $g_L\to g_L^{(c)}$ with
\begin{equation}\label{rescale-cs}
 g_S^{(c)}=2g_S\quad\text{and}\quad    g_L^{(c)}(g_L^{(c)}-1)={\textstyle \frac{e^{2c}}{16}}\quad (g_L^{(c)}>0),
\end{equation}
\end{subequations}
leads one to recover $L_x^{\text{cs}}$ \eqref{Lcs}, \eqref{norm-coupling} for $c\to +\infty$. (Here the vectors $e_1,\ldots ,e_n$ refer to the standard unit basis of $\mathbb{C}^n$.)

The eigenfunctions of $L_x^{\text{bc}}$ \eqref{Lbc}  are moreover  known to be expressible in terms of the hyperoctahedral hypergeometric function of Heckman and Opdam associated with the root system $BC_n$  \cite{hec-sch:harmonic,opd:lecture}. Following a recent approach due to Oshima and Shimeno \cite{shi:limit,osh-shi:heckman-opdam}, we will infer
that the eigensolutions for $L_x^{\text{t}}$ \eqref{Lt} and $L_x^{\text{cs}}$ \eqref{Lcs} can be retrieved rigorously from this multivariate hypergeometric function
via partial confluences that amount to a careful lift of
the limit transitions $L_x^{\text{bc}}\to L_x^{\text{t}}$ and $L_x^{\text{bc}}\to L_x^{\text{cs}}$ to the level of the eigensolutions. For the special values of the spectral variable that parametrize the  eigenstates in the discrete spectrum, the hyperoctahedral hypergeometric series truncates and become multivariate Jacobi polynomials \cite{hec-sch:harmonic,opd:lecture,for:log-gases,hal-lan:unified}. The corresponding limit transition from these Jacobi polynomial eigenfunctions of $L_x^{\text{bc}}$ \eqref{Lbc} to the multivariate Bessel polynomial eigenfunctions of $ L_x^{\text{cs}}$ \eqref{Lcs} \cite{hal-lan:unified} was already used by Halln\"as to find the above-mentioned discrete spectrum eigenstates of
$L_x^{\text{cs}}$ \cite{hal:multivariable}. From this perspective, the  $\text{bc}\to \text{cs}$ confluence employed here should be viewed as an extension
of the limit transition in \cite{hal:multivariable}
to  the situation of arbitrary spectral values.
Somewhat different hypergeometric confluences have been considered in
Refs. \cite{jeu:paley-wiener,ros-voi:positivity,ben-ors:analysis,ben-ors:bessel,ros-voi:integral}, where  limit transitions from the hypergeometric eigenfunctions of the hyperbolic quantum Calogero-Sutherland systems to the multivariate Bessel eigenfunctions of the rational quantum Calogero systems were established, and  in Ref. \cite{ros-koo-voi:limit}  where a (dual) limit transition between the Heckman-Opdam hypergeometric functions of types $BC_n$ and $A_{n-1}$ was uncovered.
Moreover, in previous work   \cite{die-ems:bispectral} we retrieved a one-parameter family of hyperoctahedal Whittaker functions diagonalizing the quantum Toda chain with one-sided boundary perturbations of Morse type \cite{skl:boundary}  from the $BC_n$-type  Heckman-Opdam hypergeometric function, via a confluent limit that combines both the $\text{bc}\to \text{t}$ and the $\text{bc}\to \text{cs}$ confluences.
The latter Whittaker function unifies the ubiquitous $B_n$-type orthogonal Whittaker function (cf.  e.g. \cite{ish:whittaker,bis-zyg:point-to-line}) with the $C_n$-type symplectic Whittaker function, whereas---as emphasized above---here we arrive instead at a two-parameter hypergeometric deformation of the $D_n$-type orthogonal Whittaker function.

It is known that the Heckman-Opdam hypergeometric function and its confluent Whittaker simile satisfy
a system of bispectrally dual \cite{dui-gru:differential,gru:bispectral} difference equations in the spectral variable \cite{rui:finite-dimensional,che:inverse,cha:bispectrality,kha-leb:integral,bab:equations,hal-rui:kernel,skl:bispectrality,koz:aspects,bor-cor:macdonald,die-ems:difference}. These difference equations  can be interpreted themselves as eigenvalue equations  for a quantum integrable particle system characterized by rational Macdonald-Ruijsenaars type difference operators \cite{rui:complete,die:difference}. In other words, the multivariate special functions under consideration diagonalize these rational Macdonald-Ruijsenaars type quantum systems upon interchanging the role of the position and spectral variables (cf also \cite{rui:action-angle,rui:relativistic,feh-kli:duality,pus:hyperbolic,pus:scattering,feh:action-angle,feh-gor:duality,feh-mas:action-angle} for corresponding dualities within the realms of classical mechanics).
Fitting within this striking bispectral duality picture,  the partial hypergeometric confluences give rise to corresponding dual difference equations for the eigensolutions of $L_x^{\text{t}}$ \eqref{Lt} and $L_x^{\text{cs}}$ \eqref{Lcs}. With the aid of the  difference equations in question  we confirm that our eigensolutions for $L_x^{\text{t}}$ \eqref{Lt} and $L_x^{\text{cs}}$ \eqref{Lcs} are in fact holomorphic in the spectral variable.

The presentation is organized as follows. In Section \ref{sec2} the Harish-Chandra solutions of the eigenvalue equations for 
$L_x^{\text{t}}$ \eqref{Lt} and $L_x^{\text{cs}}$ \eqref{Lcs} are presented. In Section \ref{sec3} the $n$-particle wave functions for the corresponding
Schr\"odinger operators are constructed through
symmetrization of an appropriately normalized Harish-Chandra series with respect to the action of the hyperoctahedral group of signed permutations.
Section \ref{sec4} formulates a system of difference equations in the spectral variable  for the wave functions at issue, and 
in Section \ref{sec5} it is shown how these difference equations reveal that the wave functions are entire in the spectral variable. To facilitate the readability of this work, the construction of the Harish-Chandra series and most of the details regarding the
$\text{bc}\to \text{t}$ and $\text{bc}\to \text{cs}$  confluences at the level of the eigensolutions are relegated to two technical appendices at the end.

\section{Harish-Chandra series solution of the eigenvalue equation}\label{sec2}

For $\xi=(\xi_1,\ldots ,\xi_n)$, $x=(x_1,\ldots ,x_n)$ in $\mathbb{C}^n$, let
\begin{equation*}
\langle \xi ,x\rangle := \xi_1 x_1+\cdots +\xi_n x_n,
\end{equation*}
and let us write for $\nu=(\nu_1,\ldots ,\nu_n)\in\mathbb{Z}^n$  that
\begin{equation}\label{dominant}
\nu\geq 0 \Leftrightarrow \nu_1+\cdots +\nu_k\geq 0\ \text{for}\ k=1,\ldots ,n,
\end{equation}
and that $\nu >0$ if $\nu\geq 0$ and $\nu\neq 0$.

Given $\text{r}\in \{ \text{bc},\text{t},\text{cs}\}$,  we define a corresponding Harish-Chandra series
\begin{subequations}
\begin{equation}\label{HC:a}
\phi_\xi^{\text{r}}(x;g)= \sum_{ \nu\geq 0}  a^{\text{r}}_\nu (\xi ;g)  e^{\langle \xi -\nu ,x\rangle } ,
\end{equation}
with expansion coefficients $a^{\text{r}}_\nu (\xi ;g)$ determined by the recurrence
\begin{equation}\label{HC:b}
 \langle \nu-2\xi ,\nu\rangle a^{\text{r}}_\nu (\xi ; g) =   \sum_{\substack{\alpha\in R\\ l\geq 1}} \text{a}^{\text{r}}_{\alpha ,l}(g) a^{\text{r}}_{\nu-l\alpha}  (\xi ; g)\qquad (\nu >0) 
\end{equation}
in combination with
the initial condition
\begin{equation}\label{HC:c}
a^{\text{r}}_\nu (\xi ;g) := \begin{cases}   1&\text{if}\ \nu =0, \\ 0&\text{if}\ \nu\not\geq 0. \end{cases} 
\end{equation}
\end{subequations}
Here $g:=(g_S,g_M,g_L)\in\mathbb{C}^3$, $R:=R_S\cup R^+_M\cup R^-_M\cup R_L$ with
$R_S:=
\{ e_j  \mid 1\leq j \leq n\}$, $R^\pm_M:=  \{ e_j\pm e_k \mid  1\leq j < k\leq n\}$, $R_L:=
\{ 2e_j  \mid 1\leq j \leq n\}$, and
\begin{equation*}\label{a-alpha-bc}
\text{a}^{\text{bc}}_{\alpha ,l} (g) :=
\begin{cases}
g_S(g_S+2g_L-1) l&\text{if}\ \alpha \in R_S ,\\
2g_M(g_M-1) l&\text{if}\ \alpha\in R^+_M \cup R^-_M,\\
4g_L(g_L-1) l& \text{if}\ \alpha \in R_L  ,
\end{cases}
\end{equation*}
\begin{equation*}\label{a-alpha-t}
\text{a}^{\text{t}}_{\alpha ,l} (g) :=
\begin{cases}
g_S(g_S+2g_L-1) l&\text{if}\ \alpha=e_n ,\\
a_j \delta_{l-1} &\text{if}\ \alpha= e_j-e_{j+1}\   (1\leq j <n)    ,\\
a_{n-1}\delta_{l-1}  &\text{if}\ \alpha= e_{n-1}+e_{n},\\
4g_L(g_L-1) l& \text{if}\ \alpha =2e_n ,
\end{cases}
\end{equation*}
\begin{equation*}\label{a-alpha-cs}
\text{a}^{\text{cs}}_{\alpha ,l} (g) :=
\begin{cases}
g_S\delta_{l-1} &\text{if}\ \alpha \in R_S ,\\
2g_M(g_M-1) l&\text{if}\ \alpha\in R^-_M ,\\
a_n \delta_{l-1} & \text{if}\ \alpha \in R_L  
\end{cases}
\end{equation*}
(with the implicit assumption that
$ \text{a}^{\text{r}}_{\alpha ,l}(g) \equiv 0$ otherwise), while $\delta_l := 1$ if $l=0$ and $\delta_l:=0$ if $l\neq 0$.

\begin{proposition}[Harish-Chandra Series Solution]\label{HC:prp}
Let $\emph{r}\in \{ \emph{bc},\emph{t},\emph{cs}\}$.
 (i) The Harish-Chandra series $\phi_\xi^{\text{r}}(x;g)$ \eqref{HC:a}--\eqref{HC:c} constitutes an analytic function of $(\xi, x,g)\in \mathbb{C}^n_{+,\emph{reg}}\times\mathbb{A}^{\emph{r}}\times\mathbb{C}^3$, where
\begin{subequations}
\begin{equation}
\mathbb{C}_{\emph{reg},+}^n:=\{ \xi\in\mathbb{C}^n\mid 2\xi_j\not\in\mathbb{Z}_{> 0}, \,\xi_j\pm\xi_k \not\in\mathbb{Z}_{> 0}\, (j<k)\} 
\end{equation}
and
\begin{equation}\label{chamber}
\mathbb{A}^{\emph{r}}:= \begin{cases}  \{ x\in\mathbb{R}^n \mid x_1>x_2>\cdots >x_n>0\}  &\ \text{if}\ \emph{r}=\emph{bc} ,\\
 \{ x\in\mathbb{R}^n \mid x_n>0\}  &\ \text{if}\ \emph{r}=\emph{t} , \\
  \{ x\in\mathbb{R}^n \mid x_1>x_2>\cdots >x_n \}  &\ \text{if}\ \emph{r}=\emph{cs} ,
\end{cases} 
\end{equation}
\end{subequations}
which---as a (meromorphic)  function of the spectral variable $\xi\in\mathbb{C}^n$---has at most simple
poles along the hyperplanes belonging to $\mathbb{C}^n\setminus \mathbb{C}^n_{+,\emph{reg}}$.

(ii) The function in question provides an eigensolution for  $L_x^{\emph{r}}$ \eqref{Lt}, \eqref{Lcs}, \eqref{Lbc}:
\begin{subequations}
\begin{equation}\label{ef:eq}
L_x^{\emph{r}}\phi^{\emph{r}}_\xi (x;g) =\langle\xi,\xi\rangle \phi^{\emph{r}}_\xi (x;g)
\end{equation}
that enjoys a plane-wave asymptotics of the form
\begin{equation}\label{ef:as}
\lim_{x\to +\infty} | \phi^{\emph{r}}_{\xi }(x;g)- e^{\langle \xi, x\rangle} | =0 \quad\text{for}\quad \emph{Re}(\xi)=0,
\end{equation}
\end{subequations}
where the notation $x\to +\infty$ means that $x_k-x_{k+1}\to +\infty$ for $k=1,\ldots ,n$ (with the convention that $x_{n+1}\equiv 0$).
\end{proposition}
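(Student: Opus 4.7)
The plan is to treat all three cases $\text{r}\in\{\text{bc},\text{t},\text{cs}\}$ in parallel, following the Harish-Chandra series paradigm from Heckman-Opdam theory. The first step is to recognize that the recurrence \eqref{HC:b} is exactly the coefficient-matching condition forced by the eigenvalue equation \eqref{ef:eq} on the formal ansatz \eqref{HC:a}. Expanding each hyperbolic potential geometrically as
\[\frac{1}{\sinh^{2}(y/2)} = 4\sum_{l\geq 1} l\, e^{-ly}, \qquad \frac{1}{\sinh^{2}(y)} = 4\sum_{l\geq 1} l\, e^{-2ly}\]
(absolutely convergent for $\mathrm{Re}\, y>0$, i.e.\ on $\mathbb{A}^{\text{r}}$), and noting that the Toda nearest-neighbor terms in $L_x^{\text{t}}$ already appear as single exponentials, the action of $L_x^{\text{r}}-\langle\xi,\xi\rangle$ on $e^{\langle\xi-\mu,x\rangle}$ produces $\langle\mu-2\xi,\mu\rangle\, e^{\langle\xi-\mu,x\rangle}$ from the Laplacian together with a finite sum of shifted exponentials $e^{\langle\xi-\mu-l\alpha,x\rangle}$ from the potentials. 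Equating the coefficient of $e^{\langle\xi-\nu,x\rangle}$ to zero then reproduces \eqref{HC:b}, with the data $\text{a}^{\text{r}}_{\alpha,l}(g)$ read off from the potential coefficients.

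The second step is convergence and joint analyticity. For $\xi\in\mathbb{C}^n_{+,\text{reg}}$ the denominator $\langle\nu-2\xi,\nu\rangle$ never vanishes on $\{\nu>0\}$, so \eqref{HC:b} determines the $a_\nu^{\text{r}}$ unambiguously as rational functions of $\xi$, polynomial in $g$. To bound them I would build a positive majorant: replace every coupling by its modulus, use the lower bound $|\langle\nu-2\xi,\nu\rangle|\geq c(\xi)(1+\|\nu\|)$ valid uniformly on compact subsets of $\mathbb{C}^n_{+,\text{reg}}$, and run the induction on the partial order $\nu\geq 0$. Since for each fixed $\nu$ only finitely many terms on the right of \eqref{HC:b} are nonzero (one per pair $(\alpha,l)$ with $l\alpha\leq\nu$), a standard Heckman-Opdam style induction then yields $|a_\nu^{\text{r}}(\xi;g)|\leq C^{\|\nu\|}$ on compacts. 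Combined with $|e^{\langle\xi-\nu,x\rangle}|\leq e^{\mathrm{Re}\langle\xi,x\rangle}e^{-\delta\|\nu\|}$ valid for $x$ in any compact of $\mathbb{A}^{\text{r}}$ (by the cone condition $\nu\geq 0$), this dominates \eqref{HC:a} by a convergent geometric series and delivers joint analyticity.

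The third step is the pole structure. Iterating \eqref{HC:b} downwards, $a_\nu^{\text{r}}(\xi;g)$ is a rational function in $\xi$ whose denominator is a product of factors $\langle\mu-2\xi,\mu\rangle$ for certain $0<\mu\leq\nu$; after decomposing $\mu$ into simple roots, each such factor vanishes only on hyperplanes of the form $2\xi_j=m$ or $\xi_j\pm\xi_k=m$ with $m\in\mathbb{Z}_{>0}$, i.e.\ in $\mathbb{C}^n\setminus\mathbb{C}^n_{+,\text{reg}}$. To upgrade this to the statement that the poles are \emph{at most simple} I would argue inductively along the dominance order, showing that the new factor $\langle\nu-2\xi,\nu\rangle^{-1}$ introduced at level $\nu$ contributes a simple pole only on the unique hyperplane $\langle\xi,\nu\rangle=\tfrac{1}{2}\langle\nu,\nu\rangle$, which must be checked against the poles already present on the right of \eqref{HC:b}. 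This no-accumulation-of-poles step is the main technical obstacle, and I would expect it to be handled by the same mechanism underpinning Opdam's analysis of the bc-type hypergeometric series, transferred to the t and cs cases where some potential pole-producing terms simply disappear because the associated coefficients $\text{a}^{\text{r}}_{\alpha,l}(g)$ vanish identically.

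Finally, for part (ii), the identity \eqref{ef:eq} holds by construction of the recurrence, the term-by-term application of $L_x^{\text{r}}$ being legitimated by the majorant bound (which survives differentiation up to a polynomial factor in $\|\nu\|$). The plane-wave asymptotics \eqref{ef:as} then follows from
\[\phi^{\text{r}}_\xi(x;g)-e^{\langle\xi,x\rangle} = \sum_{\nu>0} a_\nu^{\text{r}}(\xi;g)\, e^{\langle\xi-\nu,x\rangle}:\]
on $\mathrm{Re}(\xi)=0$ each summand has modulus $|a_\nu^{\text{r}}(\xi;g)|\, e^{-\langle\nu,x\rangle}$, which tends to zero as $x\to+\infty$ in the prescribed sense by the dominance $\nu>0$, while the tail is controlled uniformly by the geometric majorant already constructed.
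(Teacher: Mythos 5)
Your overall architecture---read the recurrence off the eigenvalue equation, bound the coefficients, sum the series, track the poles---matches the paper's in outline, but two of your steps have genuine gaps, and in both cases the missing ingredient is exactly what the paper's proof is built around: deriving the cases $\text{r}=\text{t},\text{cs}$ \emph{from} the known case $\text{r}=\text{bc}$ via the partial confluences $x\to x+c\rho_M$, $x\to x+c\rho_L$ with the accompanying parameter substitutions, rather than treating the three cases in parallel. The first gap is in your convergence step: the termwise domination $|e^{\langle\xi-\nu,x\rangle}|\leq e^{\mathrm{Re}\langle\xi,x\rangle}e^{-\delta\|\nu\|}$ is \emph{false} on $\mathbb{A}^{\text{t}}$ and $\mathbb{A}^{\text{cs}}$, because the dominance condition $\nu\geq 0$ only controls the partial sums $\nu_1+\cdots+\nu_k$. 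For instance, for $\text{r}=\text{cs}$ the point $x=(1,0,-1)\in\mathbb{A}^{\text{cs}}$ and the admissible weight $\nu=me_3$ give $e^{-\langle\nu,x\rangle}=e^{m}\to\infty$, and a bound $|a^{\text{r}}_\nu|\leq C^{\|\nu\|}$ with a fixed $C$ cannot absorb this (nor does a fixed geometric rate suffice even on $\mathbb{A}^{\text{bc}}$ near its walls; one needs the Helgason--Opdam form of the estimate, $|a^{\text{r}}_\nu|\leq A_\varepsilon e^{\varepsilon\langle\nu,\rho\rangle}$ for \emph{every} $\varepsilon>0$). The paper obtains convergence on the larger chambers by noting that the translation \eqref{translate-t} (resp.\ \eqref{translate-cs}) acts on $e^{-\langle\xi,x\rangle}\phi^{\text{r}}_\xi$ as the rescaling $a_j\to a_je^{-c}$ (resp.\ $g_S\to g_Se^{-c}$, $a_n\to a_ne^{-2c}$) together with $\mathbb{A}^{\text{t}}=\cup_{c}(\mathbb{A}^{\text{bc}}+c\rho_M)$; equivalently, one must exploit the superexponential decay of the coefficients generated by the $\delta_{l-1}$-type terms, which your uniform bound discards.

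The second gap is in the pole structure. Your claim that each denominator factor $\langle\mu-2\xi,\mu\rangle$ ``vanishes only on hyperplanes of the form $2\xi_j=m$ or $\xi_j\pm\xi_k=m$'' is incorrect: the recurrence reaches weights such as $\mu=2e_1+e_2$, whose factor $5-4\xi_1-2\xi_2$ vanishes on a hyperplane that does \emph{not} lie in $\mathbb{C}^n\setminus\mathbb{C}^n_{+,\text{reg}}$. Showing that all such spurious singularities are removable is the hard content of part (i); for $\text{r}=\text{bc}$ it is Opdam's theorem, and the paper transfers it to $\text{t}$ and $\text{cs}$ by proving (Proposition \ref{HC-confluence:prp}) that the locally regularized series $\Delta_{\text{U}}(\xi)\phi^{\text{bc}}_\xi$ converges under the confluence to $\Delta_{\text{U}}(\xi)\phi^{\text{t}}_\xi$, resp.\ $\Delta_{\text{U}}(\xi)\phi^{\text{cs}}_\xi$, so that the vanishing of $\Delta_{\text{U}}\phi^{\text{bc}}_\xi$ along the spurious hyperplanes passes to the limit. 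Your ``I would expect it to be handled by the same mechanism'' leaves precisely this step unsupplied, and a direct Opdam-style argument for $\text{t}$ and $\text{cs}$ would require the higher commuting integrals of Oshima and Halln\"as, which the paper deliberately avoids. (By contrast, the \emph{simplicity} of the surviving poles is elementary: distinct $\mu>0$ determine distinct hyperplanes $\text{H}_\mu$, so each chain in the iterated recurrence contributes a given factor at most once and the regularizer $\Delta_{\text{U}}$ clears it.) Your part (ii)---termwise verification of the eigenvalue equation and the asymptotics on $\mathrm{Re}\,\xi=0$---does agree with the paper.
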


When $\text{r}=\text{bc}$, part (i) of this proposition is due to Opdam, cf. \cite[Cor. 2.2,\, Cor. 2.10]{opd:root} and \cite[Prp. 4.2.5]{hec-sch:harmonic}, while (the more straightforward) part (ii) is  already immediate from Opdam's previous work with Heckman \cite{hec-opd:root}, cf. also \cite[Sec. 4.2]{hec-sch:harmonic} and \cite[Sec. 6.1]{opd:lecture}.
For this case an elementary proof of part (i), based on the integrability of $L_x^{\text{bc}}$ in combination with a residue computation, was sketched in \cite[Lem. 6.5]{opd:lecture}.
In order to replicate the latter proof for the cases $\text{r}=\text{t}$ and $\text{r}=\text{cs}$, we would have to recur to the higher commuting quantum integrals from
Refs. \cite[Thm. 8]{osh:completely} and \cite[Sec. 5]{hal:multivariable}, respectively. 
Instead, we will avoid the explicit use of higher quantum integrals at this point and exploit---in Appendix \ref{appA} below---a Harish-Chandra type analysis in combination with partial confluent limit transitions $\phi_\xi^{\text{bc}}\to \phi_\xi^{\text{t}}$ and $\phi_\xi^{\text{bc}}\to \phi_\xi^{\text{cs}}$  in the spirit of  Oshima and Shimeno \cite{shi:limit,osh-shi:heckman-opdam}, so as to derive the corresponding statements for the cases $\text{r}=\text{t}$ and $\text{r}=\text{cs}$ directly from those for the known case $\text{r}=\text{bc}$.

\section{$n$-Particle wave function}\label{sec3}
For $(\xi ,x,g)\in  \mathbb{C}_{\text{reg}}^n\times\mathbb{A}^{\text{r}}\times\mathbb{C}^3$, where
 \begin{equation}\label{Creg}
 \mathbb{C}_{\text{reg}}^n:=\{ \xi\in\mathbb{C}^n\mid 2\xi_j\not\in\mathbb{Z}, \,\xi_j\pm\xi_k \not\in\mathbb{Z} \}  ,
 \end{equation}
we consider the $n$-particle wave function of the form
\begin{subequations}
\begin{equation}\label{wave-function:a}
\Phi_\xi^{\text{r}}(x;g):=\sum_{\text{w}\in W}  C^{\text{r}}(\text{w} \xi ;g )  \phi_{\text{w}\xi}^{\text{r}}(x;g) ,
\end{equation}
with
\begin{equation}\label{c-function}
C^{\text{r}}(\xi ;g ):=
 \prod_{1\leq j\leq n}    c_w (\xi_j)   \prod_{1\leq j<k\leq n} c_v(\xi_j+\xi_k)  c_v(\xi_j-\xi_k) ,
\end{equation}
\begin{equation}
c_v(z)=
\begin{cases}
\frac{\Gamma (z)}{\Gamma (g_M+z)} &\text{if}\ \text{r=bc}\ \text{or}\ \text{r=cs} ,\\
\Gamma (z) &\text{if}\ \text{r=t},
\end{cases}
\end{equation}
and
\begin{equation}\label{wave-function:d}
c_{w}(z)= \begin{cases}  \frac{\Gamma (2z ) \Gamma (\frac{1}{2}g_S+z)}{\Gamma (g_S+2z) \Gamma (\frac{1}{2}g_S+g_L+z)}&\text{if}\ \text{r=bc}\ \text{or}\ \text{r=t}, \\
\frac{\Gamma (2z)}{\Gamma (\frac{1}{2}+g_S+z)}    &\text{if}\ \text{r=cs}. 
\end{cases}
\end{equation}
\end{subequations}
Here $\Gamma (\cdot)$ refers to the gamma function and the
symmetrization in the spectral variable is with respect to the  action of the {\em hyperoctahedral group} $W:=  \{1 ,-1\}^n \rtimes S_n $
 of {\em signed permutations} $\text{w}=(\epsilon, \sigma )$ on $\mathbb{C}^n$:
\begin{equation}
\xi=(\xi_1,\ldots ,\xi_n)\stackrel{\text{w}}{\longrightarrow} (\epsilon_1 \xi_{\sigma^{-1} (1)},\ldots ,\epsilon_n \xi_{\sigma^{-1} (n)})=:\text{w}\xi ,
\end{equation}
where $\sigma = \left( \begin{matrix} 1& 2& \cdots & n \\
 \sigma (1)&\sigma (2)&\cdots & \sigma (n)
 \end{matrix}\right) $ belongs to the symmetric group $S_n$ and $\epsilon = (\epsilon_1,\ldots,\epsilon_n)$ with $\epsilon_j\in \{ 1,-1\}$ (for $j=1,\ldots, n$). It is clear from its definition
 and Proposition \ref{HC:prp} that  this wave function is analytic for $(\xi ,x,g)\in  \mathbb{C}_{\text{reg}}^n\times\mathbb{A}^{\text{r}}\times\mathbb{C}^3$.
The hyperoctahedral symmetry moreover guarantees that---as a meromorphic function of $\xi\in\mathbb{C}^n$---the simple poles
of  $ \Phi_\xi^{\text{r}}(x;g)$ \eqref{wave-function:a}--\eqref{wave-function:d} along the hyperplanes $\xi_j=0$ and $\xi_j\pm \xi_k=0$ (stemming from $C^{\text{r}}(\xi ;g )$) are removable. The following proposition is now immediate from Proposition \ref{HC:prp}.

\begin{proposition}[$n$-Particle Wave Function]\label{WF:prp} For any $\emph{r}\in \{ \emph{bc},\emph{t},\emph{cs}\}$,
the $n$-particle wave function $ \Phi_\xi^{\text{r}}(x;g)$ \eqref{wave-function:a}--\eqref{wave-function:d},  $(\xi ,x,g)\in  \mathbb{C}_{\emph{reg}}^n\times\mathbb{A}^{\emph{r}}\times\mathbb{C}^3$,
provides an analytic solution of the eigenvalue equation 
\begin{subequations}
\begin{equation}
L^{\emph{r}}_x\Phi^{\emph{r}}_\xi (x;g)  =\langle\xi,\xi\rangle \Phi^{\emph{r}}_\xi (x;g) 
\end{equation}
enjoying the following plane-wave asymptotics:
\begin{equation}
\lim_{x\to +\infty}  \Bigl|  \Phi^{\emph{r}}_{\xi }(x;g)-\sum_{\emph{w}\in W} C^{\emph{r}}(\emph{w}\xi;g) e^{\langle \emph{w}\xi, x\rangle}\Bigr| =0\quad\text{for}\quad\emph{Re}(\xi)=0.
\end{equation}
\end{subequations}
\end{proposition}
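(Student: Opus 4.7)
My plan is to deduce each of the three assertions directly from Proposition \ref{HC:prp} by summing its conclusions termwise over the finite hyperoctahedral group $W$. The key preliminary observation is that $\mathbb{C}^n_{\text{reg}}$ is $W$-invariant and in fact contained in $\bigcap_{\text{w}\in W}\text{w}^{-1}\mathbb{C}^n_{+,\text{reg}}$, because the defining conditions $2\xi_j\notin\mathbb{Z}$ and $\xi_j\pm\xi_k\notin\mathbb{Z}$ are stable under signed permutations and strictly stronger than those defining $\mathbb{C}^n_{+,\text{reg}}$.

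Granted this, I would dispatch analyticity as follows. Proposition \ref{HC:prp}(i) gives the analyticity of each $\phi^{\text{r}}_{\text{w}\xi}(x;g)$ on $\mathbb{C}^n_{\text{reg}}\times\mathbb{A}^{\text{r}}\times\mathbb{C}^3$. The prefactor $C^{\text{r}}(\text{w}\xi;g)$ is also analytic there: the purely $\xi$-dependent gamma-pole locations are excluded by the regularity condition, and the only apparent $g$-dependent pole---coming from $\Gamma(\tfrac{1}{2}g_S+(\text{w}\xi)_j)$ in the numerator of $c_w$ in the $\text{bc},\text{t}$ cases---is cancelled by the matching pole of $\Gamma(g_S+2(\text{w}\xi)_j)$ in the denominator. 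A finite sum of analytic terms being analytic, this yields the analyticity of $\Phi^{\text{r}}_\xi(x;g)$.

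For the eigenvalue equation, I would use that $W$ acts by Euclidean isometries, so $\langle\text{w}\xi,\text{w}\xi\rangle=\langle\xi,\xi\rangle$, and apply $L^{\text{r}}_x$ to \eqref{wave-function:a} term by term via \eqref{ef:eq}:
\[
L^{\text{r}}_x\Phi^{\text{r}}_\xi(x;g)=\sum_{\text{w}\in W}C^{\text{r}}(\text{w}\xi;g)\langle\text{w}\xi,\text{w}\xi\rangle\phi^{\text{r}}_{\text{w}\xi}(x;g)=\langle\xi,\xi\rangle\Phi^{\text{r}}_\xi(x;g).
\]
For the plane-wave asymptotics, I would rewrite
\[
\Phi^{\text{r}}_\xi(x;g)-\sum_{\text{w}\in W}C^{\text{r}}(\text{w}\xi;g)e^{\langle\text{w}\xi,x\rangle}=\sum_{\text{w}\in W}C^{\text{r}}(\text{w}\xi;g)\bigl(\phi^{\text{r}}_{\text{w}\xi}(x;g)-e^{\langle\text{w}\xi,x\rangle}\bigr),
\]
invoke \eqref{ef:as} on each summand (the condition $\text{Re}(\xi)=0$ is $W$-invariant, so it passes to every $\text{w}\xi$), and close with the triangle inequality---the sum being finite and the $C^{\text{r}}(\text{w}\xi;g)$ being $x$-independent constants.

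I anticipate no genuine obstacle, since the heavy analytic lifting is already packaged in Proposition \ref{HC:prp}. The sole bookkeeping subtlety is the cancellation of the $g_S$-dependent numerator poles of $c_w$ in the $\text{bc},\text{t}$ cases, together with the comparison of the regularity sets $\mathbb{C}^n_{\text{reg}}$ versus $\mathbb{C}^n_{+,\text{reg}}$; once these two points are checked, all three claims drop out from the term-by-term manipulations above.
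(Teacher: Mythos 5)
Your proposal is correct and follows exactly the route the paper takes: the paper declares Proposition \ref{WF:prp} ``immediate from Proposition \ref{HC:prp},'' meaning precisely the termwise summation over the finite group $W$ that you carry out, using the $W$-invariance of $\mathbb{C}^n_{\text{reg}}$ and its inclusion in $\mathbb{C}^n_{+,\text{reg}}$, the invariance of $\langle\xi,\xi\rangle$ and of the condition $\text{Re}(\xi)=0$, and the analyticity of the $c$-factors on the regular set. Your explicit check that the poles of $\Gamma(\tfrac12 g_S+z)$ are cancelled by $\Gamma(g_S+2z)$ (via the duplication formula) is a detail the paper leaves implicit, but it is the right verification.
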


When $\text{r}=\text{bc}$, Proposition \ref{WF:prp} goes back to Heckman and Opdam \cite{hec-sch:harmonic,opd:lecture}. Indeed, the $n$-particle wave function
$\Phi_\xi^{\text{bc}}(x;g) $ \eqref{wave-function:a}--\eqref{wave-function:d} then amounts  in essence to their $BC_n$-type  hyperoctahedral hypergeometric function
$ F_{BC_n}(\xi,x;g)$ \cite{opd:root,hec-sch:harmonic,opd:lecture}:
\begin{subequations}
\begin{equation}
\Phi_\xi^{\text{bc}}(x;g) =  { \delta (x;g)}{C^{\text{bc}}(\rho_g ;g)}     F_{BC_n}(\xi,x;g)
\end{equation}
($(\xi,x,g)\in\mathbb{C}_{\text{reg}}^n\times \mathbb{A}^{\text{bc}}\times\mathbb{C}^3$), where
\begin{align}
\delta (x;g) :=& \prod_{1\leq j\leq n}   (e^{\frac{1}{2} x_j}  -  e^{-\frac{1}{2} x_j} )^{g_S} (e^{x_j}  -  e^{-x_j} )^{g_L}  \\
&\times \prod_{1\leq j <k\leq n}      (e^{\frac{1}{2} (x_j+x_k)}  -  e^{-\frac{1}{2} (x_j+x_k)} )^{g_M}   (e^{\frac{1}{2} (x_j-x_k)}  -  e^{-\frac{1}{2} (x_j-x_k)} )^{g_M}
\nonumber
\end{align}
and
\begin{equation}
\rho_g:={\textstyle ((n-1)g_M+\frac{1}{2}g_S+g_L,(n-2)g_M+\frac{1}{2}g_S+g_L,\ldots,\frac{1}{2}g_S+g_L)} .
\end{equation}
\end{subequations}

\section{Difference equations in the spectral variable}\label{sec4}
In \cite[Thm. 2]{die-ems:difference} a system of (bispectral dual) difference equations for the $BC_n$-type hypergeometric function of Heckman and Opdam was presented. The following theorem  extends this bispectral duality for the case $\text{r}=\text{bc}$ so as to include the cases $\text{r}=\text{t}$ and $\text{r}=\text{cs}$.

\begin{theorem}[Difference Equations]\label{WFD:thm}
For any $\emph{r}\in \{ \emph{bc},\emph{t},\emph{cs}\}$,
$\ell \in \{ 1,\ldots ,n\}$ and $(\xi,x,g)\in \mathbb{C}_{\emph{reg}}^n\times\mathbb{A}^{\emph{r}}\times\mathbb{C}^3$,
the wave function $\Phi^r_\xi(x;g)$ \eqref{wave-function:a}--\eqref{wave-function:d} satisfies the  following difference equation in the spectral variable:
\begin{subequations}
\begin{equation}\label{DE:a}
\sum_{\substack{J\subset \{ 1,\ldots ,n\} ,\, 0\leq|J|\leq \ell\\
               \epsilon_j \in \{ 1,-1\} ,\; j\in J}}
\!\!\!\!\!\!\!\!\!
U^{\emph{r}}_{J^c,\, \ell -|J|}(\xi ;g)
V^{\emph{r}}_{\epsilon J}(\xi ;g)
\Phi^{\emph{r}}_{\xi +e_{\epsilon J}} (x;g) =    E^{\emph{r}}_\ell (x)\Phi^{\emph{r}}_{\xi} (x;g) ,
\end{equation}
where
\begin{align}\label{V}
V^{\emph{r}}_{\epsilon J}(\xi ;g)&=
\prod_{j\in J} 
w (\epsilon_j\xi_j)
\prod_{\substack{j\in J\\ k\not\in J}} 
v(\epsilon_j\xi_j+\xi_k) v(\epsilon_j\xi_j-\xi_k) 
\nonumber \\
&  \times
\prod_{\substack{j,j^\prime \in J\\ j<j^\prime}}
v(\epsilon_j\xi_j+\epsilon_{j^\prime}\xi_{j^\prime})
v(\epsilon_j\xi_j+\epsilon_{j^\prime}\xi_{j^\prime}+1) ,
\end{align}
\begin{align}\label{U}
U^{\emph{r}}_{K,p}(\xi ;g)= (-1)^{p}
\sum_{\stackrel{I\subset K,\, |I|=p}
              {\epsilon_i  \in \{ 1,-1\} ,\; i\in I }}
&\Biggl( \prod_{i\in I} 
w(\epsilon_i\xi_i) \prod_{\substack{i\in I\\ k\in K\setminus I}} 
v(\epsilon_i\xi_i+\xi_{k})   v(\epsilon_i\xi_i-\xi_{k}) \\
&  \times
\prod_{\substack{i,i^\prime \in I\\ i<i^\prime}}
v(\epsilon_i\xi_i+\epsilon_{i^\prime}\xi_{i^\prime})
v(-\epsilon_i\xi_i-\epsilon_{i^\prime}\xi_{i^\prime}-1)\Biggr) ,
 \nonumber 
 \end{align}
 with
 \begin{equation}
v(z)= \begin{cases}    1+g_Mz^{-1} &\text{if}\ \emph{r=bc}\ \text{or}\ \emph{r=cs} ,\\
 z^{-1}  &\text{if}\ \emph{r=t},
\end{cases}
\end{equation}
\begin{equation}
w(z)= \begin{cases}   \left( 1+(\frac{1}{2}g_S+g_L)z^{-1} \right)  \left(   1+g_S(1+2z)^{-1}    \right)&\text{if}\ \emph{r=bc}\ \text{or}\ \emph{r=t}, \\
\frac{1}{2} z^{-1} \left(   \frac{1}{2}+g_S(1+2z)^{-1}    \right)  &\text{if}\ \emph{r=cs},
\end{cases}
\end{equation}
 and
 \begin{equation}\label{DE:f}
E^{\emph{r}}_\ell(x) =
\begin{cases}
{ 2^{2\ell }\sum_{\substack{J\subset \{ 1,\ldots, n\} \\ |J|=\ell}}   \prod_{j\in J}  \sinh^2\left(\frac{x_j}{2}\right) }&\text{if}\ \emph{r=bc} ,\\
e^{x_1+\cdots +x_\ell}    +\delta_{n-l}  e^{x_1+\cdots +x_{n-1}} (e^{-x_n}-2) &\text{if}\ \emph{r=t} ,\\
\sum_{\substack{J\subset \{ 1,\ldots, n\} \\ |J|=\ell}}   \prod_{j\in J}  e^{x_j}  &\text{if}\ \emph{r=cs} .
\end{cases}
\end{equation}
\end{subequations}
Here $|J|$ denotes the cardinality of $J\subset\{ 1,\ldots, n\}$,  $J^c:=\{ 1,\ldots, n\}\setminus J$, and 
$e_{\epsilon J} := \sum_{j\in J} \epsilon_j e_j $.
\end{theorem}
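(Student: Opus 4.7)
The plan is to deduce the cases $\text{r}=\text{t}$ and $\text{r}=\text{cs}$ from the established case $\text{r}=\text{bc}$ \cite[Thm.~2]{die-ems:difference} by lifting the partial confluent limit transitions $\phi^{\text{bc}}_\xi \to \phi^{\text{t}}_\xi$ and $\phi^{\text{bc}}_\xi \to \phi^{\text{cs}}_\xi$ from Section~\ref{sec2} (and Appendix~\ref{appA}) to the level of the bispectral difference equation. Concretely, for the $\text{bc}\to\text{t}$ confluence one uses the shift $x \to x + c\rho_M$ together with the parameter substitution \eqref{rescale-t}, and for the $\text{bc}\to\text{cs}$ confluence one uses $x \to x + c\rho_L$ together with \eqref{rescale-cs}. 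In each case the idea is to apply the bc-difference equation to the translated and parameter-rescaled wave function, then let $c \to +\infty$ after dividing through by an appropriate fixed power of $g_M^{(c)}$ or $g_L^{(c)}$.

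For the $\text{bc}\to\text{t}$ step the relevant observations are that $v^{\text{bc}}(z) = 1 + g_M^{(c)} z^{-1} \sim g_M^{(c)} v^{\text{t}}(z)$ while $w^{\text{bc}} \equiv w^{\text{t}}$, and that a direct count gives that each product $V^{\text{bc}}_{\epsilon J}(\xi) U^{\text{bc}}_{J^c,\,\ell - |J|}(\xi)$ contains exactly $\ell(2n-\ell-1)$ factors of $v^{\text{bc}}$ regardless of $|J|$. Dividing the bc-difference equation by $(g_M^{(c)})^{\ell(2n-\ell-1)}$ therefore produces a finite sum whose summands converge term by term to $V^{\text{t}}_{\epsilon J} U^{\text{t}}_{J^c,\,\ell-|J|} \Phi^{\text{t}}_{\xi + e_{\epsilon J}}$. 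On the right-hand side, $E^{\text{bc}}_\ell(x + c\rho_M)$ is dominated by the unique subset $J=\{1,\ldots,\ell\}$ that maximizes the $\rho_M$-weight $\sum_{j \in J}(n-j) = \ell(2n-\ell-1)/2$; recalling $g_M^{(c)} \sim e^{c/2}$ one extracts in the limit the leading term $e^{x_1 + \cdots + x_\ell}$. When $\ell = n$, however, the index $n$ is forced into $J$ and the untranslated factor $\sinh^2(x_n/2) = (e^{x_n} - 2 + e^{-x_n})/4$ produces the additional contribution $e^{x_1 + \cdots + x_{n-1}}(e^{-x_n} - 2)$, which accounts precisely for the $\delta_{n-\ell}$ anomaly built into \eqref{DE:f}.

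The $\text{bc}\to\text{cs}$ step is handled dually. Now $v^{\text{bc}} = v^{\text{cs}}$, while substituting $g_S^{(c)} = 2g_S$ and letting $g_L^{(c)} \to \infty$ in $w^{\text{bc}}(z)$ yields $w^{\text{bc}}(z) \sim 4 g_L^{(c)} \, w^{\text{cs}}(z)$. Since each summand on the left-hand side carries exactly $\ell$ factors of $w$, one rescales by $(4 g_L^{(c)})^{-\ell}$. On the right-hand side the translation $x \to x + c\rho_L$ acts uniformly across all indices: $E^{\text{bc}}_\ell(x + c\rho_L) \sim e^{c\ell} \sum_{|J|=\ell} \prod_{j \in J} e^{x_j} = e^{c\ell} E^{\text{cs}}_\ell(x)$, and $e^{c\ell} \sim (4 g_L^{(c)})^\ell$ by \eqref{rescale-cs}; no anomalous correction appears since no coordinate is left unshifted. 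Combined with the convergence of $\Phi^{\text{bc}}_\xi(x + c\rho_L; g^{(c)})$ to $\Phi^{\text{cs}}_\xi(x; g)$ (up to the same normalization) from Appendix~\ref{appA}, passage to the limit $c \to +\infty$ yields the claimed identity.

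The main obstacle is ensuring that the cancellations of $g_M^{(c)}$- and $g_L^{(c)}$-powers are uniform across the shift sum, so that no subleading $O(1/g_M^{(c)})$ or $O(1/g_L^{(c)})$ remainders accumulate through the $V^{\text{bc}}$ and $U^{\text{bc}}$ factors, and that the convergence of the Harish-Chandra series furnished by the appendix is uniform on compact subsets of the shifted spectral arguments $\xi + e_{\epsilon J} \in \mathbb{C}^n_{\text{reg}}$ (which is compatible with the integrality of the shifts, since the regularity condition is manifestly preserved under $\xi_j \to \xi_j \pm 1$). Once these uniformity matters are settled, the theorem follows from the rescaled bc-identity by a straightforward limit in $c$.
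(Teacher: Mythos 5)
Your proposal is correct and follows essentially the same route as the paper's Appendix B: both deduce the cases $\text{r}=\text{t}$ and $\text{r}=\text{cs}$ from the known $\text{r}=\text{bc}$ difference equations of \cite{die-ems:difference} by applying the Oshima--Shimeno confluences $x\to x+c\rho_M$, $g\to g^{(c)}$ (resp.\ $x\to x+c\rho_L$) and taking term-by-term limits of $V^{\text{bc}}$, $U^{\text{bc}}$, $E^{\text{bc}}_\ell$ and of the wave function, with exactly the uniform scaling exponent $\tfrac{c}{2}\ell(2n-\ell-1)$ (equivalently your power of $g_M^{(c)}$) and the same $\delta_{n-\ell}$ anomaly from the untranslated $\sinh^2(x_n/2)$ factor. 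The only cosmetic difference is that the wave-function confluence you invoke is established in the paper's Appendix B (Proposition \ref{WF-confluence:prp}, which supplements the Harish-Chandra series limits of Appendix \ref{appA} with the gamma-function asymptotics of the $c$-functions), not in Appendix \ref{appA} itself.
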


In Appendix \ref{appB} below we prove this theorem with the aid of the Oshima-Shimeno type confluences  $\Phi_\xi^{\text{bc}}\to \Phi_\xi^{\text{t}}$ and $\Phi_\xi^{\text{bc}}\to \Phi_\xi^{\text{cs}}$,  which arise 
by lifting the  transitions between the Harish-Chandra series
 $\phi_\xi^{\text{bc}}\to \phi_\xi^{\text{t}}$ and $\phi_\xi^{\text{bc}}\to \phi_\xi^{\text{cs}}$ from Appendix \ref{appA} to the level of the $n$-particle wave functions
$\Phi_\xi^{\text{r}}$ \eqref{wave-function:a}--\eqref{wave-function:d}.

The difference equations in Theorem \ref{WFD:thm}  reveal that---upon interchanging the role of the position variables and the spectral variables---the wave function $\Phi^{\text{r}}_\xi(x;g)$ \eqref{wave-function:a}--\eqref{wave-function:d} provides a joint eigenfunction
for the commuting families of rational Macdonald-Ruijsenaars type difference operators with hyperoctahedral symmetry introduced in \cite{die:difference}.
For $\ell =1,2$  the difference equations in question boil down to the two following identities:
\begin{subequations}
\begin{equation}\label{diffeqn1}
\sum_{\substack{ 1\leq j\leq n\\  \epsilon \in \{ 1,-1\} }}  V^{\text{r}}_{\epsilon j}(\xi;g) \Bigl( \Phi^{\text{r}}_{\xi +\epsilon e_j}(x;g)-\Phi^{\text{r}}_\xi (x;g)\Bigl) 
 = E^{\text{r}}_1(x) \Phi^{\text{r}}_\xi (x;g) 
\end{equation}
and 
\begin{align}\label{diffeqn2}
\sum_{\substack{1\leq j < j^\prime\leq n \\ \epsilon,\epsilon^\prime \in \{ 1, -1\} }} 
 &V^{\text{r}}_{\{   \epsilon j,\epsilon^\prime  j^\prime \} } (\xi ;g)  \left (\Phi^{\text{r}}_{\xi+\epsilon e_j+\epsilon^\prime e_{j^\prime}} (x;g)-\Phi^{\text{r}}_\xi (x;g)\right)
  + \\
  \sum_{1 \le j \le n,\epsilon\in\{1,-1\}}& U^{\text{r}}_{\{ 1,\ldots ,n\}\setminus \{ j\} , 1}(\xi ;g) V^{\text{r}}_{\epsilon j}(\xi ;g) \Phi^{\text{r}}_{\xi+\epsilon e_j} (x;g)
 = E^{\text{r}}_2(x)\Phi^{\text{r}}_\xi (x;g), \nonumber
\end{align}
\end{subequations}
respectively, where
 \begin{equation*}\label{HWD1:b}
V^{\text{r}}_{\epsilon j} (\xi;g)=w(\epsilon \xi_j)
\prod_{\substack{1\leq k\leq n\\ k\neq j}} v(\epsilon \xi_j+\xi_k)v(\epsilon\xi_j-\xi_k),
\end{equation*}
 \begin{align*}
 V^{\text{r}}_{\{   \epsilon j,\epsilon^\prime  j^\prime \} } (\xi ;g) =&
w(\epsilon \xi_j)
w(\epsilon^\prime \xi_{j^\prime})
v(\epsilon \xi_j+\epsilon^\prime\xi_{j^\prime})
v(1+\epsilon \xi_j+\epsilon^\prime\xi_{j^\prime})  \nonumber \\
&  \times \prod_{\substack{1\leq k\leq n \\k\neq j,j^\prime}} v(\epsilon \xi_j+\xi_k)  v(\epsilon \xi_j-\xi_k) ,
 \end{align*}
\begin{equation*}
U^{\text{r}}_{\{ 1,\ldots ,n\}\setminus \{ j\} , 1}(\xi ;g) = - \sum_{\substack{1\leq i\leq n,\, i \neq j\\ \epsilon\in\{ 1,-1\} }}
w(\epsilon \xi_i)
 \prod_{\substack{ 1\leq k\leq n \\ k\neq j , i} }
v(\epsilon\xi_i+\xi_{k})v(\epsilon\xi_i-\xi_{k}),
\end{equation*}
and
\begin{equation*}
E^{\text{r}}_1 (x) =
\begin{cases}
2^2\sum_{1\leq j\leq n}   \sinh^2 (\frac{x_j}{2}) &\text{if}\ \text{r=bc} ,\\
e^{x_1}    +\delta_{n-1}   (e^{-x_1}-2) &\text{if}\ \text{r=t} ,\\
\sum_{1\leq j\leq n}  e^{x_j}  &\text{if}\ \text{r=cs} ,
\end{cases}
\end{equation*}
\begin{equation*}
E^{\text{r}}_2 (x) =
\begin{cases}
2^4\sum_{1\leq j<k\leq n}   \sinh^2 (\frac{x_j}{2})  \sinh^2 (\frac{x_k}{2}) &\text{if}\ \text{r=bc} ,\\
e^{x_1+x_2}    +\delta_{n-2}   e^{x_1}(e^{-x_2}-2) &\text{if}\ \text{r=t} ,\\
\sum_{1\leq j<k\leq n}  e^{x_j+x_k}  &\text{if}\ \text{r=cs} .
\end{cases}
\end{equation*}
For $\text{r}=\text{bc}$, the  difference equation in Eq. \eqref{diffeqn1} (corresponding to $\ell=1$) is originally due to Chalykh \cite[Thm 6.12]{cha:bispectrality}.

\section{Analyticity in the spectral variable}\label{sec5}
A deep result of Opdam states that for $\text{r}=\text{bc}$ the $n$-particle wave function $\Phi^{\emph{r}}_\xi (x;g)$ \eqref{wave-function:a}--\eqref{wave-function:d} extends analytically  in $\xi$ to an entire function of
the spectral variable, cf.  \cite[Thm. 2.8]{opd:root}, \cite[Thm. 4.3.14]{hec-sch:harmonic} and \cite[Thm. 6.13]{opd:lecture}. 
While it seems nontrivial to adapt Opdam's proof to the cases $\text{r}=\text{t}$ and $\text{r}=\text{cs}$  when basing us on the tools currently at our disposal, we observe that 
 it is relatively straightforward to extend the result  to these two cases
via the difference equations in Eqs. \eqref{diffeqn1}, \eqref{diffeqn2}.

\begin{theorem}[Analyticity]\label{ac-wf:thm} For any $\emph{r}\in \{ \emph{bc},\emph{t},\emph{cs}\}$,
the wave function $\Phi^{\emph{r}}_\xi (x;g)$ \eqref{wave-function:a}--\eqref{wave-function:d} extends (uniquely) to an analytic function of
$(\xi,x,g)\in\mathbb{C}^n\times\mathbb{A}^{\emph{r}}\times\mathbb{C}^3$.
 \end{theorem}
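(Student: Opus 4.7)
The $\text{r}=\text{bc}$ case is Opdam's theorem, cited just before the statement; I focus on $\text{r}\in\{\text{t},\text{cs}\}$. By Proposition \ref{WF:prp} together with the $W$-invariance observation in the paragraph following it, $\Phi^{\text{r}}_\xi(x;g)$ is meromorphic in $\xi\in\mathbb{C}^n$ with at most simple poles along the ``integer hyperplanes''
\[
\mathcal{H}=\bigcup_{m\in\mathbb{Z}\setminus\{0\}}\Bigl(\bigcup_{1\le j\le n}\{2\xi_j=m\}\cup\bigcup_{1\le j<k\le n}\bigl(\{\xi_j+\xi_k=m\}\cup\{\xi_j-\xi_k=m\}\bigr)\Bigr).
\]
Showing that each of these codimension-one simple poles is removable is enough: joint holomorphy in $(\xi,x,g)\in\mathbb{C}^n\times\mathbb{A}^{\text{r}}\times\mathbb{C}^3$ then follows by a standard Riemann-type extension from the joint analyticity on $\mathbb{C}^n_{\text{reg}}\times\mathbb{A}^{\text{r}}\times\mathbb{C}^3$ already supplied by Proposition \ref{WF:prp}.

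The plan is to exploit the $\ell=1$ difference equation \eqref{diffeqn1} as a meromorphic identity in $\xi$, with $(x,g)\in\mathbb{A}^{\text{r}}\times\mathbb{C}^3$ as parameters. Fix a putative polar hyperplane $H=\{\lambda(\xi)=m\}\subset\mathcal{H}$ and take the residue along $H$ of both sides of \eqref{diffeqn1}. The right-hand side contributes $E^{\text{r}}_1(x)\,\mathrm{Res}_H[\Phi^{\text{r}}_\xi]$, while on the left the summand $V^{\text{r}}_{\epsilon j}(\xi;g)\,\Phi^{\text{r}}_{\xi+\epsilon e_j}(x;g)$ gives either $V^{\text{r}}_{\epsilon j}|_H\cdot\mathrm{Res}_H[\Phi^{\text{r}}_\xi]$ (when the shift does not move $H$) or $V^{\text{r}}_{\epsilon j}|_H\cdot\mathrm{Res}_{H+\epsilon e_j}[\Phi^{\text{r}}_\xi]$ evaluated at the shifted point---the latter vanishing whenever the translated hyperplane $H+\epsilon e_j$ happens to be a Weyl wall removed by the $W$-symmetrization. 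Assembling these contributions produces a linear inhomogeneous recursion in the integer label $m$ for the family $R_m:=\mathrm{Res}_{\lambda(\xi)=m}[\Phi^{\text{r}}_\xi]$, with coefficients rational in the surviving coordinates on $H$ and in $g$, and with the $x$-dependence entering only through $E^{\text{r}}_1(x)$.

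The recursion is anchored by the vanishing boundary condition $R_m\equiv 0$ for $|m|$ sufficiently large. This follows from the Harish-Chandra construction in Appendix \ref{appA}: the recurrence \eqref{HC:b} places the coefficients $a^{\text{r}}_\nu(\xi;g)$ of $\phi^{\text{r}}_\xi$ on denominators $\langle\nu-2\xi,\nu\rangle$, so for $\xi$ deep inside a suitable cone (e.g.\ with $\mathrm{Re}(\xi_j)$ all sufficiently negative) these denominators stay uniformly bounded away from zero, making $\phi^{\text{r}}_\xi$---and hence $\Phi^{\text{r}}_\xi$ after multiplication by the gamma-type $c$-function \eqref{c-function}---holomorphic throughout a half-space of $\mathbb{C}^n$. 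All residues $R_m$ with $|m|$ sufficiently large must therefore vanish, and the recursion then propagates this vanishing to every $m\in\mathbb{Z}\setminus\{0\}$, completing the codimension-one removability.

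The main obstacle is the careful bookkeeping of the residue computation in the preceding paragraph: the rational coefficient $V^{\text{r}}_{\epsilon j}(\xi;g)$, through the factors $v$ and $w$ appearing in Theorem \ref{WFD:thm}, carries its own poles and zeros on integer hyperplanes, which interact in subtle ways with the polar divisor of the shifted wave function $\Phi^{\text{r}}_{\xi+\epsilon e_j}$ on $H$. One can keep the analysis clean by first treating generic $g$ and then recovering the full range $g\in\mathbb{C}^3$ by analytic continuation in the coupling parameters. In the exceptional cases where $V^{\text{r}}_{\epsilon j}|_H$ vanishes identically in $x$---so that the $\ell=1$ recursion becomes degenerate at $H$---one supplements it with the $\ell=2$ difference equation \eqref{diffeqn2}, which furnishes an independent constraint sufficient to resolve the remaining hyperplanes.
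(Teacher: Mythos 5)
Your overall architecture coincides with the paper's: reduce to showing that the simple, codimension-one residues along the integer hyperplanes vanish (Hartogs/Riemann extension handles the rest), and obtain that vanishing from the linear relations among residues on parallel hyperplanes that the dual difference equations \eqref{diffeqn1} and \eqref{diffeqn2} impose. However, there is a genuine gap in how you anchor the resulting recursion in $m$. You claim that $R_m:=\mathrm{Res}_{\lambda(\xi)=m}\Phi^{\text{r}}_\xi$ vanishes for $|m|$ large because $\phi^{\text{r}}_\xi$ is holomorphic for $\xi$ deep in a suitable cone. This does not work. First, the hyperplane $\{2\xi_1=m\}$ is pinned at $\mathrm{Re}(\xi_1)=m/2$, so it never enters the region where all $\mathrm{Re}(\xi_j)$ are very negative; holomorphy of $\Phi^{\text{r}}_\xi$ on such a cone says nothing about the residue along that hyperplane, which is a function on the \emph{entire} affine hyperplane. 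Second, and more fundamentally, $\Phi^{\text{r}}_\xi=\sum_{\mathrm{w}\in W}C^{\text{r}}(\mathrm{w}\xi;g)\phi^{\text{r}}_{\mathrm{w}\xi}$ is not holomorphic on any half-space: for every $m\in\mathbb{Z}\setminus\{0\}$ and every $\xi$ on $\{2\xi_1=m\}$, the terms with $\epsilon_1=\mathrm{sgn}(m)$ contribute genuine simple poles of $\phi^{\text{r}}_{\mathrm{w}\xi}$ (since $2(\mathrm{w}\xi)_j=|m|\in\mathbb{Z}_{>0}$ then lies outside $\mathbb{C}^n_{+,\text{reg}}$), while the terms with $\epsilon_1=-\mathrm{sgn}(m)$ contribute poles of the gamma factors in $C^{\text{r}}$. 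The cancellation of these individually nonzero residues is precisely the content of the theorem and cannot be assumed for large $|m|$; there is no regime of $m$ in which the vanishing is free.

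The only hyperplanes on which the vanishing comes for free are the Weyl walls $m=0$, where the hyperoctahedral invariance of $\Phi^{\text{r}}_\xi$ in $\xi$ forces the residue (which is anti-invariant under the corresponding reflection) to vanish --- a fact you yourself note when discarding the $m=0$ walls. The paper therefore anchors the induction there and propagates \emph{upward} in $m$: assuming $\mathrm{Res}_{1;k}\Phi=\mathrm{Res}_{12;k}\Phi=0$ for $0\le k<m$, it multiplies \eqref{diffeqn1} (resp.\ \eqref{diffeqn2}) by $(2\xi_1-m+2)^d$ (resp.\ $(\xi_1+\xi_2-m+2)^d$) with $d\in\{1,2\}$ and passes to the limit, obtaining in each of the cases $m=1,2,3$ and $m\ge4$ a relation of the form (nonvanishing coefficient)$\,\times\,R_m=0$, using in addition the symmetry $R_{-m}=-R_m$ and the already-established identifications $[\Phi_{\xi-e_1}]_{2\xi_1=1}=[\Phi_\xi]_{2\xi_1=1}$, etc. Your recursion, run in the direction you propose (from large $|m|$ down), has no valid starting point; reversing its direction and supplying the $m=0$ anchor together with the low-$m$ case analysis is exactly what is needed to close the argument, and is what the paper's Tables \ref{res1:tab} and \ref{res12:tab} record. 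Your remaining points (generic $g$ plus analytic continuation in the couplings; invoking \eqref{diffeqn2} for the hyperplanes $\xi_j+\xi_k=m$) are consistent with the paper's treatment.
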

 
 \begin{proof}
Since---viewed as a meromorphic function of the spectral variable $\xi\in\mathbb{C}^n$---the Harish-Chandra series $\phi^{\text{r}}_\xi (x;g)$ \eqref{HC:a}--\eqref{HC:c} exhibits at most simple poles along the hyperplanes of $\mathbb{C}^n\setminus\mathbb{C}^n_{+,\text{reg}}$ (by Proposition \ref{HC:prp}), and the gamma factors in  $c$-function $C^{\text{r}}(\xi ;g)$ \eqref{c-function} give at most rise to simple poles along the hyperplanes of $\mathbb{C}^n\setminus\mathbb{C}^n_{-,\text{reg}}$, where $\mathbb{C}^n_{-,\text{reg}}:=
\{ \xi\in\mathbb{C}^n\mid 2\xi_j\not\in\mathbb{Z}_{\leq 0}, \,\xi_j\pm\xi_k \not\in\mathbb{Z}_{\leq 0}\, (j<k)\} $, it is clear that
 the wave function
$\Phi^{\text{r}}_\xi (x;g)$  \eqref{wave-function:a}--\eqref{wave-function:d}  possesses at most simple poles along the hyperplanes of $\mathbb{C}^n\setminus\mathbb{C}^n_{\text{reg}}$.  
By Hartogs' theorem, it is enough to infer that the singularities in question are removable. Moreover, in view of the hyperoctahedral symmetry it thus suffices to
verify that  the residues
\begin{align*}
\text{Res}_{1;m}\Phi^{\text{r}}_\xi (x;g) &:= \lim_{2\xi_1\to m}   (2\xi_1-m)\Phi^{\text{r}}_\xi (x;g), \\
\text{Res}_{12;m} \Phi^{\text{r}}_\xi (x;g) &:= \lim_{\xi_1+\xi_2\to m}   (\xi_1+\xi_2-m)\Phi^{\text{r}}_\xi (x;g)
\end{align*}
on $\text{H}_{1;m}:=\{ \xi\in\mathbb{C}^n\mid  2\xi_1= m \} $ and $ \text{H}_{12;m}:=\{ \xi\in\mathbb{C}^n\mid  \xi_1+\xi_2 = m \} $ vanish for $m\in \mathbb{Z}_{\geq 0}$.
As observed just before Proposition \ref{WF:prp}, for $m=0$ the vanishing of the residues under consideration is  immediate from the hyperoctahedral symmetry. For $m>0$, we deduce the vanishing of the residues inductively by means of the difference equations \eqref{diffeqn1}, \eqref{diffeqn2}, upon
assuming that $\text{Res}_{1;k}\Phi^{\text{r}}_\xi(x;g)=0$ and $\text{Res}_{12;k}\Phi^{\text{r}}_\xi(x;g)=0$ for $0\leq k <m$.  
To this end Eqs.  \eqref{diffeqn1} and  \eqref{diffeqn2} are multiplied by appropriate powers  $(2\xi_1-m+2)^d$ and $(\xi_1+\xi_2-m+2)^d$, respectively,
before performing the corresponding limits
$2\xi_1-m+2\to 0$ and $\xi_1+\xi_2-m+2\to 0$. This entails simple linear
relations both for  $\text{Res}_{1;m}\Phi^{\text{r}}_\xi(x;g)$ and for  $\text{Res}_{12;m}\Phi^{\text{r}}_\xi(x;g)$,  from which it is manifest that the residues in question vanish.

\begin{table}[h]
{\footnotesize
\begin{equation*}
\begin{array}{ccll}
m             & d & \text{Linear relation stemming from Eq. \eqref{diffeqn1}}                                                            & \text{Additional relations} \\
& & & \\
1             & 2 &   \text{Res}_{1;-1}V_1 \text{Res}_{1;1}\Phi =0     &     \text{Res}_{1;-1}V_1   \not\equiv 0      \\
2             &   2&\text{Res}_{1;0}V_1 \text{Res}_{1;2} \Phi +
\text{Res}_{1;0}V_{-1} \text{Res}_{1;-2} \Phi  = 0  & \begin{cases} \text{Res}_{1;-2}   \Phi =-\text{Res}_{1;2}   \Phi  \\    \text{Res}_{1;0}V_{-1}=-\text{Res}_{1; 0}V_{1}     \not\equiv 0 \end{cases}  \\
3& 1&  \begin{array}{l}  [ V_1(\xi) ] _{2\xi_1=1} \text{Res}_{1;3}\Phi + \\
\ \ \text{Res}_{1;1} V_{-1} [ \Phi_{\xi-e_1}-\Phi_\xi]_{2\xi_1=1} = 0\end{array}   & \begin{cases}[ \Phi_{\xi-e_1}]_{2\xi_1=1}
=[\Phi_{\xi}]_{2\xi_1=1} \\    [ V_1(\xi) ] _{2\xi_1=1}  \not\equiv 0  
  \end{cases}     \\
  \ge 4 & 1&  [ V_1(\xi) ]_{ 2\xi_1= m-2} \text{Res}_{1;m} \Phi  = 0 &  [ V_1(\xi)] _{2\xi_1=m-2}   \not\equiv 0
\end{array}
\end{equation*}
}
\caption{Computation verifying that $\text{Res}_{1;m}\Phi^{\text{r}}_\xi(x;g)\equiv 0$} \label{res1:tab}
\end{table}

\begin{table}[h]
{\footnotesize
\begin{equation*}
\begin{array}{ccll}
m             & d & \text{Linear relation stemming from Eq. \eqref{diffeqn2}}                                                            & \text{Additional relations} \\
& & & \\
1             & 2 &  \text{Res}_{12;-1}V_{\{ +1,+2\} } \Bigl( \text{Res}_{12;1}\Phi 
-\text{Res}_{12;-1}\Phi  \Bigr) =0   &   \begin{cases}    \text{Res}_{12;-1}\Phi
=-\text{Res}_{12;1} \Phi  \\    \text{Res}_{12;-1}V_{\{ +1,+2\} }  \not\equiv 0          \end{cases}     \\
2             &   2&\begin{array}{l} \text{Res}_{12;0} V_{\{ +1,+2\} } \text{Res}_{12;2}\Phi
+\\
\ \ \text{Res}_{12;0} V_{\{ -1,-2\} } \text{Res}_{12;-2}\Phi
=0 \end{array} & \begin{cases}    \text{Res}_{12;2}\Phi
=-\text{Res}_{12;-2} \Phi   \\   \text{Res}_{12;0} V_{\{ +1,+2\} }
= \\ - \text{Res}_{12;0} V_{\{ -1,-2\} } \not\equiv 0 \end{cases}  \\
3& 1&   \begin{array}{l}  [ V_{\{ +1,+2\} }(\xi) ]_{\xi_1+\xi_2=1} \text{Res}_{12;3} \Phi  +\\
\ \ \text{Res}_{12;1} V_{\{ -1,-2\}}(\xi) [ \Phi_{\xi-e_1 -e_2}-\Phi_\xi ]_{\xi_1+\xi_2=1} 
=0 \end{array} & \begin{cases}   [ \Phi_{\xi-e_1 -e_2}]_{\xi_1+\xi_2=1}
= \\  [ \Phi_{\xi} ]_{\xi_1+\xi_2=1}   \\
 [ V_{\{ +1,+2\} }(\xi)  ]_{\xi_1+\xi_2=1}\not\equiv 0  \end{cases}     \\
  \ge 4 & 1& [ V_{\{ +1,+2\}}(\xi)]_{\xi_1+\xi_2=m-2}  \text{Res}_{12;m}\Phi = 0 & [ V_{\{ +1,+2\}}(\xi)]_{\xi_1+\xi_2=m-2}  \not\equiv 0
\end{array}
\end{equation*}
}
\caption{Computation verifying that $\text{Res}_{12;m}\Phi^{\text{r}}_\xi(x;g)\equiv 0$} \label{res12:tab}
\end{table}

For each case the details of the pertinent residue computation is encoded in Tables \ref{res1:tab} and \ref{res12:tab},
by specifying the corresponding linear relation for the residue stemming from the difference equation together with
some elementary additional relations  that permit to deduce the vanishing of the residue at issue. Notice that for typographical motives,  the superscript $\text{r}$ and the dependence on the variables are suppressed (whenever this does not lead to confusion).

For instance, in  order to check that $\text{Res}_{1;2}   \Phi^{\text{r}}_{\xi} (x;g) \equiv 0$  on $\text{H}_{1;2}\times\mathbb{A}^\text{r}\times\mathbb{C}^3$, one multiplies
Eq. \eqref{diffeqn1} by $4\xi_1^2$ before performing the limit 
$2\xi_1\to 0$, which entails the relation 
$
\text{Res}_{1;0}V^{\text{r}}_1(\xi ;g) \text{Res}_{1;0} \Phi^{\text{r}}_{\xi+e_1}(x;g) +
\text{Res}_{1;0}V^{\text{r}}_{-1}(\xi;g) \text{Res}_{1;0} \Phi^{\text{r}}_{\xi-e_1}(x;g) = 0.
$
Upon rewriting $
\text{Res}_{1;0}   \Phi^{\text{r}}_{\xi+e_1}(x;g)$ in terms of $\text{Res}_{1;2}   \Phi^{\text{r}}_{\xi}(x;g)
 $ and $
\text{Res}_{1;0}   \Phi^{\text{r}}_{\xi-e_1}(x;g)$ in terms of $\text{Res}_{1;-2}   \Phi^{\text{r}}_{\xi}(x;g)
 $, it follows that $\text{Res}_{1;2}   \Phi^{\text{r}}_{\xi} (x;g) \equiv 0$  on $\text{H}_{1;2}\times\mathbb{A}^{\text{r}}\times\mathbb{C}^3$, because
$
\text{Res}_{1;-2}   \Phi^{\text{r}}_{\xi}(x;g)=-\text{Res}_{1;2}   \Phi_{\xi}(x;g)
 $
by the hyperoctahedral symmetry, and $
\text{Res}_{1; 0}V^{\text{r}}_{-1} (\xi ;g) =-\text{Res}_{1;0}V^{\text{r}}_{1}(\xi ;g) 
      \not\equiv 0.
$
 \end{proof}

Notice that the arguments in the proof of Theorem \ref{ac-wf:thm} persist verbatim for $\text{r}=\text{bc}$. Still, this way  we do not arrive at a new alternative proof
of the analyticity in the spectral variable for the $BC_n$ hypergeometric function. The reason is that the corresponding $\text{r}=\text{bc}$ difference equations
in Theorem \ref{WFD:thm} were derived in Ref. \cite{die-ems:difference} from the Pieri  recurrence formulas for the $BC_n$ Jacobi polynomials by analytic continuation, i.e. by {\em invoking} Opdam's analyticity result in the first place. Combined with the present proof of Theorem \ref{ac-wf:thm}, this state of affairs reveals that---given the Pieri formulas for the $BC_n$ Jacobi polynomials (and growth estimates for the $BC_n$ hypergeometric function from \cite{opd:harmonic})---Opdam's analyticity in the spectral variable and
our $\text{r}=\text{bc}$ difference equations   in fact follow from each other.

\appendix

\section{Proof of Propostition \ref{HC:prp}}\label{appA}
To verify Proposition \ref{HC:prp}, a Harish-Chandra type analysis of Heckman and Opdam covering the case $\text{r}=\text{bc}$ \cite{opd:root,hec-sch:harmonic,opd:lecture} is broadened so as to incorporate the cases $\text{r}=\text{t}$ and $\text{r}=\text{cs}$. 
In order to confirm that the regularity domain of the  Harish-Chandra series  in the spectral variable for $\text{r}=\text{bc}$ \cite{opd:root} 
persists
when $\text{r}=\text{t}$ and $\text{r}=\text{cs}$, we employ
partial hypergeometric ($\text{bc}\to \text{t}$ and $\text{bc}\to \text{cs}$) confluences 
of a  type considered by Oshima and Shimeno
\cite{shi:limit,osh-shi:heckman-opdam}.

 \subsection{Local regularization of the Harish-Chandra coefficients}
Given  an arbitrarily chosen bounded  domain  $\text{U}\subset\mathbb{C}^n$ with closure $\overline{\text{U}}$, let 
\begin{equation}\label{regularization}
 \Delta_\text{U}(\xi): = \prod_{\substack{\mu>0 \\ \text{H}_\mu\cap \overline{\text{U}} \neq \emptyset}}   \langle \mu-2\xi,\mu\rangle ,
 \end{equation}
 where $\text{H}_\mu :=\{ \xi\in\mathbb{C}^n \mid  \langle \mu-2\xi,\mu\rangle =0\}$.
It is immediate from the recurrence relations for  $a^{\text{r}}_\nu (\xi;g)$ in Eqs. \eqref{HC:b}, \eqref{HC:c} that   for any $\nu\geq 0$ and any compact $\text{K}\subset \mathbb{C}^3$ the regularized Harish-Chandra coefficient
$\Delta_\text{U}(\xi) a^{\text{r}}_\nu (\xi;g)$  is holomorphic and bounded in $(\xi,g)$ on $\text{U}\times \text{K}$.

 \subsection{Convergence of the regularized Harish-Chandra series}
Let us
  deviate  in this subsection from the normalization convention in Eq. \eqref{norm-coupling} by allowing momentarily for an arbitrary configuration of auxiliary coupling constants
 $a:=(a_1,\ldots ,a_n)\in\mathbb{C}^n$. 
 Following
  \cite[Ch. IV, Lem. 5.3]{hel:groups} and \cite[Lem. 2.1]{opd:root}, one ratifies the subsequent uniform exponential bound on the growth of the recurrence coefficients.

 \begin{lemma}[Bound on the Recurrence Coefficients]\label{HC-bound:lem}
For any $\varepsilon >0$, there exists a constant $A>0$ (depending only on the above choice of $\emph{U}$, $\emph{K}$ and $\varepsilon$) such that for all $(\xi,g)\in \emph{U}\times \emph{K}$:
 \begin{equation}
 | \Delta_\emph{U}(\xi) a^{\emph{r}}_\nu (\xi;g)| \leq A e^{\varepsilon \langle \nu ,\rho\rangle}    
  \quad (\forall \nu \geq 0),
  \end{equation}
  where   $ \rho:= \rho_M+\rho_L=\sum_{1\leq j\leq n} (n+1-j) e_j$.
 \end{lemma}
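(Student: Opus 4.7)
The plan follows the strategy of \cite[Lem.~2.1]{opd:root} (cf.\ also \cite[Ch.~IV, Lem.~5.3]{hel:groups}). Introduce the rescaled coefficients
$$
b_\nu(\xi; g) := e^{-\varepsilon \langle \nu, \rho\rangle} \Delta_U(\xi)\, a^{\text{r}}_\nu(\xi; g),
$$
and prove the uniform boundedness $\sup_{\nu \geq 0} \sup_{\overline{U} \times K} |b_\nu| < \infty$ by induction on $\nu$ with respect to the partial order from \eqref{dominant}. Multiplying the recurrence \eqref{HC:b} by $e^{-\varepsilon \langle \nu, \rho\rangle} \Delta_U(\xi)$ translates it into
$$
\langle \nu - 2\xi, \nu\rangle\, b_\nu(\xi; g) = \sum_{\alpha \in R,\, l \geq 1} \text{a}^{\text{r}}_{\alpha, l}(g)\, e^{-\varepsilon l \langle \alpha, \rho\rangle}\, b_{\nu - l\alpha}(\xi; g),
$$
with initial datum $b_0(\xi; g) = \Delta_U(\xi)$, a polynomial uniformly bounded on $\overline{U}$.

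The key is to split the indices $\{\nu > 0\}$ into the finite set $S_U := \{\nu > 0 : H_\nu \cap \overline{U} \neq \emptyset\}$ and its complement. Finiteness of $S_U$ follows from the boundedness of $\overline{U}$: vanishing of $\|\nu\|^2 - 2\langle \xi, \nu\rangle$ for some $\xi \in \overline{U}$ forces $\|\nu\| \leq 2 \sup_{\xi \in \overline{U}} \|\xi\|$. For $\nu \in S_U$, the potential vanishing of $\langle \nu - 2\xi, \nu\rangle$ on $\overline{U}$ is precisely compensated by a matching factor of $\Delta_U(\xi)$, so the quotient arising in the recurrence for $b_\nu$ is, by induction, holomorphic in $(\xi, g)$ on a neighborhood of $\overline{U} \times K$, hence uniformly bounded there by iteration over the finitely many indices in $S_U$.

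For $\nu \notin S_U$, the hyperplane $H_\nu$ stays a positive distance away from $\overline{U}$, so that $|\langle \nu - 2\xi, \nu\rangle| \geq c \|\nu\|^2$ uniformly in $\xi \in \overline{U}$, with $c > 0$ independent of $\nu$. Since every $\alpha \in R$ satisfies $\langle \alpha, \rho\rangle \geq 1$ and inspection of the three cases $\text{r} \in \{\text{bc}, \text{t}, \text{cs}\}$ reveals that $|\text{a}^{\text{r}}_{\alpha, l}(g)|$ grows at most linearly in $l$ (each coefficient is either proportional to $l$ or supported at $l = 1$), the series
$$
C_\varepsilon := \sup_{g \in K} \sum_{\alpha \in R,\, l \geq 1} |\text{a}^{\text{r}}_{\alpha, l}(g)|\, e^{-\varepsilon l \langle \alpha, \rho\rangle}
$$
converges. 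Taking absolute values in the rescaled recurrence gives $|b_\nu(\xi; g)| \leq (C_\varepsilon / c \|\nu\|^2) \max_{0 \leq \mu < \nu} |b_\mu(\xi; g)|$, and as soon as $\|\nu\|^2 \geq C_\varepsilon / c$ the prefactor is bounded by $1$. Induction then propagates the finite initial bound---valid on the remaining bounded set of indices $\nu$ with $\|\nu\|^2 < C_\varepsilon / c$---to all $\nu \geq 0$.

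The main obstacle I anticipate concerns the inductive step for $\nu \in S_U$: one must simultaneously track that the poles of $a^{\text{r}}_\nu$ along hyperplanes meeting $\overline{U}$ are simple and exactly cancelled by the matching factors in $\Delta_U(\xi)$, so that $b_\nu$ is genuinely holomorphic (not merely meromorphic) on $\overline{U}$. This pole-tracking is of a piece with the meromorphic structure of the Harish-Chandra series itself and, in Opdam's $\text{r} = \text{bc}$ treatment, is carried along as part of the same induction; adapting it to the present cases $\text{r} \in \{\text{t}, \text{cs}\}$ is routine once the coefficient tables $\text{a}^{\text{r}}_{\alpha, l}(g)$ have been substituted, since the structural feature that matters---namely, $\langle \alpha, \rho\rangle > 0$ for all $\alpha \in R$ together with at-most linear growth of $|\text{a}^{\text{r}}_{\alpha, l}(g)|$ in $l$---persists uniformly across all three cases.
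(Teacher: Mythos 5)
Your argument is correct and follows essentially the same route as the paper's proof (the standard Helgason--Opdam estimate): a uniform quadratic lower bound on $|\langle \nu-2\xi,\nu\rangle|$ away from a finite exceptional set of indices, convergence of the $e^{-\varepsilon l\langle\alpha,\rho\rangle}$-weighted sum of the recurrence coefficients, and induction from a finite base set on which the regularized coefficients are holomorphic and bounded. The paper merely organizes the finite base set as $\{\nu\geq 0:\langle\nu,\rho\rangle<M\}$ for a suitable integer $M$ (rather than your $S_{\mathrm{U}}\cup\{\|\nu\|^2<C_\varepsilon/c\}$) and disposes of the holomorphy of $\Delta_{\mathrm{U}}(\xi)\,a^{\mathrm{r}}_\nu(\xi;g)$ --- the point you flag as the main obstacle --- in a short remark preceding the lemma; these are cosmetic differences.
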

 \begin{proof}
 Let us pick $\text{c}>0$ and $N>0$ such that for all  $\xi\in \text{U}$:
\begin{equation*}
\langle \nu ,\nu -2\xi\rangle \geq \text{c}\langle \nu ,\rho\rangle^2\qquad \forall\nu \geq 0\ \text{with}\  \langle \nu,\rho\rangle \geq N .
\end{equation*}
Given  $\varepsilon >0$, we fix  a positive integer $M\geq N$
such that $\forall g\in \text{K}$:
\begin{equation*}
\frac{1}{\text{c}} 
 \sum_{\substack{\alpha\in R \\ l\geq 1}}  e^{-\varepsilon l \langle \alpha ,\rho\rangle } |\text{a}^{\text{r}}_{\alpha ,l} (g) |
 \leq M.
\end{equation*}
Upon fixing $A>0$ sufficiently large such that $\forall (\xi,g)\in \text{U}\times \text{K}$:
$| \Delta_\text{U}(\xi) {a}^{\text{r}}_\nu (\xi ; g) |\leq A e^{\varepsilon \langle \nu ,\rho\rangle}$ for all $\nu\geq 0$ with $\langle\nu,\rho\rangle< M$,
one deduces inductively from the recurrence in Eq. \eqref{HC:b} that also for all $\nu\geq 0$ such that $\langle\nu,\rho\rangle\geq M$:
\begin{align*}
| \Delta_\text{U}(\xi) {a}^{\text{r}}_\nu (\xi ; g) | &\leq
\frac{1}{\text{c} \langle \nu ,\rho\rangle^2}
  \sum_{\substack{\alpha\in R \\ l\geq 1}}  | \text{a}^{\text{r}}_{\alpha ,l} ( g)  \Delta_\text{U}(\xi) {a}^{\text{r}}_{\nu -l\alpha}(\xi ; g) |   \\
&\leq
\frac{1}{\text{c} \langle \nu ,\rho\rangle^2}
  \sum_{\substack{\alpha\in R \\ l\geq 1}}   | \text{a}^{\text{r}}_{\alpha ,l} (g) |   A e^{\varepsilon \langle \nu-l\alpha ,\rho\rangle}
  \\
& \leq {\textstyle \frac{M}{ \langle \nu ,\rho\rangle^2}} A e^{\varepsilon \langle \nu ,\rho\rangle} \leq A e^{\varepsilon \langle \nu ,\rho\rangle} .
\end{align*}
 \end{proof}

We see from Lemma  \ref{HC-bound:lem} that for any $\epsilon>0$
the series $e^{-\langle \xi ,x\rangle}\Delta_\text{U}(\xi )\phi^{\text{r}}_\xi(x;g)$ \eqref{HC:a}--\eqref{HC:c}
converges uniformly in absolute value when $(\xi,x,g)$ remains restricted to $ \text{U}\times (\mathbb{A}^{\text{bc}}+\epsilon \rho )\times \text{K}$. Indeed, one has that
\begin{equation}\label{bounded-convergence}
 \sum_{ \nu\geq 0}  |\Delta_\text{U}(\xi) {a}^{\text{r}}_\nu (\xi ; g)  e^{-\langle  \nu ,x  \rangle } |
\leq A   \sum_{ \nu\geq 0}   e^{- \langle \nu ,x-\varepsilon\rho \rangle } ,
\end{equation}
and the series on the RHS converges uniformly in $x$ on $ (\mathbb{A}^{\text{bc}}+\epsilon \rho )$ provided $0<\varepsilon<\epsilon$.

The upshot is that the regularized Harish-Chandra series $\Delta_\text{U}(\xi )\phi^{\text{r}}_\xi(x;g)$ converges   locally uniformly in absolute value for
$(\xi,x,g)$ in $ \text{U}\times \mathbb{A}^{\text{bc}} \times \mathbb{C}^3$. To infer that for $\text{r}=\text{t}$ and $\text{r}=\text{cs}$ this locally uniform  absolute convergence  extends in fact to the larger domain  $ \text{U}\times \mathbb{A}^{\text{r}} \times \mathbb{C}^3$, it suffices to observe that the translation \eqref{translate-t} amounts in the series $e^{-\langle \xi ,x\rangle}\phi^{\text{t}}_\xi (x;g)$ to the parameter shift $a_j\to a_j e^{-c}$ ($j=1,\ldots ,n-1$) and that
the translation  \eqref{translate-cs} amounts  in the series $e^{-\langle \xi ,x\rangle}\phi^{\text{cs}}_\xi (x;g)$
to the parameter shift $g_S\to g_Se^{-c}$,  $a_n\to a_n e^{-2c}$ ($c\in\mathbb{R}$). Notice in this connection that
 $\mathbb{A}^{\text{t}}= \cup_{c\in \mathbb{R}}  (\mathbb{A}^{\text{bc}}+c\rho_M)$ and $\mathbb{A}^{\text{cs}}= \cup_{c\in \mathbb{R}}  (\mathbb{A}^{\text{bc}}+c\rho_L)$.

\subsection{Eigenvalue equation}
By acting with the Schr\"odinger operator $L_x^{\text{r}}$ on the regularized Harish-Chandra series $\Delta_\text{U}(\xi )\phi^{\text{r}}_\xi(x;g)$, one sees that for  $(\xi,x,g)\in \text{U}\times \mathbb{A}^{\text{r}} \times \mathbb{C}^3$:
\begin{align*}
L_x^{\text{r}} \bigl( \Delta_\text{U}(\xi ) & \phi^{\text{r}}_\xi (x;g) \bigr) =  \\
&\sum_{\nu \geq 0} \Delta_\text{U}(\xi ) a_\nu^{\text{r}} (\xi;g) e^{\langle \xi -\nu,x\rangle}
\Bigl(
\langle \xi -\nu,\xi -\nu \rangle   
- \sum_{\substack{\alpha\in R\\ l\geq 1}} \text{a}^{\text{r}}_{\alpha ,l}(g) e^{- l \langle \alpha ,x\rangle} \Bigr) .
\end{align*}
Here the $\sinh^{-2}(\cdot) $ potentials were expanded using series
$\sinh^{-2}(z)= \sum_{l\geq 1} 4l e^{-2lz}$ ($\text{Re}(z)>0$), and the termwise
differentiation is justified by the locally uniform convergence in absolute value of the Harish-Chandra series. 
Rearrangement of the terms and invoking of the recurrence relations \eqref{HC:b}, \eqref{HC:c} produces
\begin{align*}
& \Delta_\text{U}(\xi ) \sum_{\nu \geq 0} e^{\langle \xi -\nu,x\rangle}
\Bigl(   \bigl( \langle\xi ,\xi\rangle +\langle \nu,\nu-2\xi\rangle \bigr)
  a^{\text{r}}_\nu (\xi;g) -
 \sum_{\substack{\alpha\in R\\ l\geq 1}} \text{a}^{\text{r}}_{\alpha ,l}(g) a^{\text{r}}_{\nu-l\alpha}  (\xi ; g)  \Bigr) \\
&= \langle\xi ,\xi\rangle
 \Delta_\text{U}(\xi )  \sum_{\nu \geq 0}  a^{\text{r}}_\nu (\xi;g) e^{\langle \xi -\nu,x\rangle}= \langle\xi ,\xi\rangle  \Delta_\text{U}(\xi )    \phi^{\text{r}}_\xi (x;g) .
\end{align*}

\subsection{Plane-wave asymptotics}
For any bounded domain $\text{U}\subset \imath \mathbb{R}^n$ (where $\imath   :=\sqrt{-1}$), the product $\Delta_\text{U}(\xi)$ \eqref{regularization} becomes empty, i.e.  $\Delta_\text{U}(\xi)\equiv 1$ in this situation.
The bound in Lemma \ref{HC-bound:lem} now reveals that for any $\xi\in \imath \mathbb{R}^n$:
\begin{align*}
&\lim_{x\to +\infty} | \phi^{\text{r}}_\xi (x;g)-e^{\langle \xi ,x\rangle} |  \leq  \lim_{x\to +\infty}   \sum_{\nu > 0}  | a^{\text{r}}_\nu (\xi;g) e^{-\langle \nu,x\rangle} |
\\
& \leq   
 \lim_{x\to +\infty}  A  \sum_{\nu > 0} e^{-\langle \nu,x-\varepsilon\rho \rangle}
  =    A \sum_{\nu > 0} \Bigl(  \lim_{x\to +\infty}e^{-\langle \nu,x-\varepsilon\rho \rangle} \Bigr) =0
\end{align*}
(by dominated convergence). Here we used that for $\nu >0$: 
 $ \lim_{x\to +\infty} e^{-\langle \nu,x-\varepsilon\rho\rangle}=0$ and $e^{-\langle \nu,x-\varepsilon\rho \rangle} < 1$ if $x\in (\mathbb{A}^{\text{bc}}+\varepsilon\rho )$.

\subsection{Partial hypergeometric confluences}
 We now lift the above $L_x^\text{bc}\to L_x^\text{t}$ and $L_x^\text{bc}\to L_x^\text{cs}$ limit transitions to the level of the Harish-Chandra series, following a strategy developed by Oshima and Shimeno \cite{shi:limit,osh-shi:heckman-opdam}.

\begin{proposition}[Partial Confluent Limits of the Harish-Chandra Series]\label{HC-confluence:prp}
Let $(\xi,x,g) \in \emph{U}\times\mathbb{A}^{\emph{bc}}\times\mathbb{C}^3$.

(i) The translation $x\to x+c\rho_M$ \eqref{translate-t} and the parameter substitution $g\to g^{(c)}$ \eqref{rescale-t} give rise to the $\emph{bc}\to \emph{t}$ confluence
\begin{subequations}
\begin{equation}
\lim_{c\to +\infty}   e^{-c\langle \xi ,\rho_M\rangle} \Delta_\emph{U}(\xi) \phi_\xi^{\emph{bc}}(x+c\rho_M; g^{(c)})= \Delta_\emph{U}(\xi) \phi^{\emph{t}}_\xi (x;g) .
\end{equation}

(ii) The translation $x\to x+c\rho_L$ \eqref{translate-cs} and the parameter substitution $g\to g^{(c)}$ \eqref{rescale-cs} give rise to the $\emph{bc}\to \emph{cs}$ confluence\begin{equation}
\lim_{c\to +\infty}   e^{-c\langle \xi ,\rho_L\rangle} \Delta_\emph{U}(\xi) \phi_\xi^{\emph{bc}}(x+c\rho_L; g^{(c)})= \Delta_\emph{U}(\xi) \phi^{\emph{cs}}_\xi (x;g) .
\end{equation}
\end{subequations}
\end{proposition}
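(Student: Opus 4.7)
The plan is to prove both confluences by combining pointwise convergence of the regularized Harish-Chandra coefficients with a uniform-in-$c$ exponential bound, and then invoking dominated convergence to interchange limit and sum. I describe part (i); part (ii) follows by the same strategy with $\rho_M$ replaced by $\rho_L$ and the appropriate rescalings of $g_S$ and $g_L$.

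Set $\tilde{a}_\nu(\xi;c):=e^{-c\langle\nu,\rho_M\rangle}\Delta_\text{U}(\xi)\,a_\nu^{\text{bc}}(\xi;g^{(c)})$, so that
\[
e^{-c\langle\xi,\rho_M\rangle}\Delta_\text{U}(\xi)\,\phi_\xi^{\text{bc}}(x+c\rho_M;g^{(c)})
=\sum_{\nu\ge 0}\tilde{a}_\nu(\xi;c)\,e^{\langle\xi-\nu,x\rangle},
\]
while the recurrence \eqref{HC:b}, multiplied by $\Delta_\text{U}(\xi)$, becomes
\[
\langle\nu-2\xi,\nu\rangle\,\tilde{a}_\nu(\xi;c)
=\sum_{\substack{\alpha\in R\\ l\ge 1}}\mathcal{A}^{(c)}_{\alpha,l}(g)\,\tilde{a}_{\nu-l\alpha}(\xi;c),
\qquad\mathcal{A}^{(c)}_{\alpha,l}(g):=e^{-cl\langle\alpha,\rho_M\rangle}\,\text{a}^{\text{bc}}_{\alpha,l}(g^{(c)}).
\]
The key calculation is that $\lim_{c\to+\infty}\mathcal{A}^{(c)}_{\alpha,l}(g)=\text{a}^{\text{t}}_{\alpha,l}(g)$ for every $\alpha\in R$ and $l\ge 1$. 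Among the roots in $R_S\cup R_L$, only those with $\langle\alpha,\rho_M\rangle=0$ (namely $\alpha=e_n$ and $\alpha=2e_n$) survive, while for $\alpha\in R_M^\pm$ the balance $2g_M^{(c)}(g_M^{(c)}-1)=2e^c$ cancels $e^{-cl\langle\alpha,\rho_M\rangle}$ precisely when $l=1$ and $\langle\alpha,\rho_M\rangle=1$, which selects $\alpha=e_j-e_{j+1}$ for $1\le j<n$ together with $\alpha=e_{n-1}+e_n$, producing the normalized values $a_j\equiv 2$ and $a_{n-1}\equiv 2$; every other $(\alpha,l)$-contribution vanishes exponentially. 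Induction on $\nu$ via the recurrence then yields the pointwise limit $\tilde{a}_\nu(\xi;c)\to\Delta_\text{U}(\xi)\,a_\nu^{\text{t}}(\xi;g)$ on $\text{U}\times\mathbb{C}^3$.

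To pass to the limit inside the series I next establish a uniform bound $|\tilde{a}_\nu(\xi;c)|\le A\,e^{\varepsilon\langle\nu,\rho\rangle}$ valid for all $c\ge c_0$ (with $c_0$ sufficiently large). This is obtained by replaying the inductive estimate in the proof of Lemma \ref{HC-bound:lem}, the only new input being the uniform summability
\[
\sup_{c\ge c_0}\sum_{\substack{\alpha\in R\\ l\ge 1}}
e^{-\varepsilon l\langle\alpha,\rho\rangle}\,\bigl|\mathcal{A}^{(c)}_{\alpha,l}(g)\bigr|<\infty,
\]
which follows from the case analysis above: the coefficients for $\alpha\in\{e_n,2e_n\}$ are $c$-independent and for $\alpha\in R_M^\pm$ we have $|\mathcal{A}^{(c)}_{\alpha,l}(g)|\le C\,l\,e^{c_0(1-l\langle\alpha,\rho_M\rangle)}$ with $\langle\alpha,\rho_M\rangle\ge 1$, hence summable in $l$ once $\varepsilon$ is small. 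Termwise dominated convergence applied to the rescaled series, together with the locally uniform estimate \eqref{bounded-convergence} transferred to this setting, then delivers the stated $\text{bc}\to\text{t}$ limit on $\text{U}\times\mathbb{A}^{\text{bc}}\times\mathbb{C}^3$.

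The hard part is the uniform exponential bound: the large factors $2g_M^{(c)}(g_M^{(c)}-1)=2e^c$ (in part (i)) and $4g_L^{(c)}(g_L^{(c)}-1)=e^{2c}/4$ (in part (ii)) appearing inside $\mathcal{A}^{(c)}_{\alpha,l}$ must be compensated, uniformly in $c$, by the translation-induced weights $e^{-cl\langle\alpha,\rho_M\rangle}$, respectively $e^{-cl\langle\alpha,\rho_L\rangle}$, and only the case-by-case matching above ensures that the compensation leaves a summable majorant with sufficient exponential decay in $\langle\nu,\rho\rangle$. For part (ii) the same scheme applies with $g_L^{(c)}\sim e^c/4$ and $g_S^{(c)}=2g_S$: the contribution from $\alpha=2e_j$ at $l=1$ yields $a_n\equiv 1/4$, the contribution from $\alpha=e_j$ at $l=1$ yields $g_S$ via $g_S^{(c)}\cdot 2g_L^{(c)}\cdot e^{-c}\to g_S$, the contribution from $\alpha\in R_M^-$ is $c$-independent, and every other $(\alpha,l)$-combination vanishes in the limit, reproducing $\text{a}^{\text{cs}}_{\alpha,l}(g)$ and concluding the argument.
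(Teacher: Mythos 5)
Your proposal is correct and follows essentially the same route as the paper: rescale the Harish--Chandra coefficients by $e^{-c\langle\nu,\rho_K\rangle}$ so that the recurrence acquires the modified coefficients $e^{-cl\langle\alpha,\rho_K\rangle}\text{a}^{\text{bc}}_{\alpha,l}(g^{(c)})$, check termwise that these converge to $\text{a}^{\text{t}}_{\alpha,l}(g)$ (resp.\ $\text{a}^{\text{cs}}_{\alpha,l}(g)$), deduce pointwise convergence of the coefficients by induction, and conclude by dominated convergence using a $c$-uniform exponential bound obtained by replaying Lemma~\ref{HC-bound:lem}. Your explicit case-by-case verification of which $(\alpha,l)$ survive the confluence is consistent with the paper's Eq.~\eqref{a-cs-t-lim}, merely spelled out in more detail.
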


\begin{proof}
The substitutions
\eqref{translate-t}, \eqref{rescale-t}  and  \eqref{translate-cs}, \eqref{rescale-cs}  in
the Harish-Chandra series $ \phi_\xi^{\text{bc}}(x;g) $ \eqref{HC:a}--\eqref{HC:c} reveal that
\begin{equation}\label{series}
 e^{-c\langle \xi ,\rho_K\rangle} \Delta_\text{U}(\xi) \phi_\xi^{\text{bc}}(x+c\rho_K; g^{(c)})= \sum_{ \nu\geq 0}  \hat{a}^{\text{bc}}_{\nu } (\xi ; g^{(c)})  e^{\langle \xi -\nu ,x\rangle } ,
\end{equation}
with $\hat{a}^{\text{bc}}_\nu (\xi ; g^{(c)})$, $\nu\geq 0$ given by
\begin{subequations}
\begin{equation}\label{rec:a}
 \langle \nu-2\xi ,\nu\rangle \hat{a}^{\text{bc}}_\nu (\xi ; g^{(c)}) =   \sum_{\substack{\alpha\in R\\ l\geq 1}} e^{-c\, l \langle \alpha ,\rho_K\rangle } \text{a}^{\text{bc}}_{\alpha ,l}(g^{(c)}) \hat{a}^{\text{bc}}_{\nu-l\alpha}  (\xi ;g^{(c)})\qquad (\nu >0) 
\end{equation}
and
\begin{equation}\label{rec:b}
\hat{a}^{\text{bc}}_\nu (\xi ;g^{(c)}) := \begin{cases} \Delta_\text{U}(\xi)  &\text{if}\ \nu =0, \\ 0&\text{if}\ \nu\not\geq 0, \end{cases} 
\end{equation}
\end{subequations}
where $K=M$ or $K=L$, respectively.

Since the sum on the RHS of Eq.  \eqref{rec:a} is finite (as $l\leq \langle \nu,\rho\rangle$ for $\nu-l\alpha\geq 0$) and
\begin{equation}\label{a-cs-t-lim}
\lim_{c\to +\infty}  e^{-c\, l \langle \alpha ,\rho_K\rangle } \text{a}^{\text{bc}}_{\alpha ,l} (g^{(c)})=\begin{cases}
\text{a}^{\text{t}}_{\alpha ,l}(g)& \text{if}\  K=M ,\\
\text{a}^{\text{cs}}_{\alpha ,l}(g)& \text{if}\  K=L,
\end{cases}
\end{equation}
it follows inductively in $\nu$ from the recurrence relations that  for all $ \xi\in \text{U}$ and $\nu \geq 0$:
\begin{equation}\label{coef-lim}
\lim_{c\to +\infty}   \hat{a}^{\text{bc}}_\nu (\xi ; g^{(c)}) =\begin{cases}
\Delta_\text{U}(\xi) a^{\text{t}}_\nu (\xi ;g) & \text{if}\  K=M ,\\
\Delta_\text{U}(\xi) a^{\text{cs}}_\nu (\xi ;g) & \text{if}\  K=L .
\end{cases}
\end{equation}
With the aid of an exponential bound on the growth of the recurrence coefficients $ \hat{a}^{\text{bc}}_\nu (\xi ; g^{(c)})$
similar to that of Lemma \ref{HC-bound:lem}, the asserted limit transitions now follow from Eqs. \eqref{series} and \eqref{coef-lim} by dominated convergence.
Indeed---by mimicking the proof of Lemma \ref{HC-bound:lem} and using Eq. \eqref{coef-lim}---it is readily seen
that  for any $\varepsilon >0$ there exists a constant $A>0$ (depending only on  $\xi\in \text{U}$, $g\in\mathbb{C}^3$ and $\varepsilon$)  such that
\begin{equation}\label{HC-cs:estimate}
\forall  c\geq 0, \nu\geq 0: \quad | \hat{a}^{\text{bc}}_\nu (\xi ; g^{(c)})| \leq A\,  e^{\varepsilon \langle \nu ,\rho\rangle} .
\end{equation}
Hence---after picking $\varepsilon >0$ sufficiently small such that $x\in (\mathbb{A}^{\text{bc}}+\varepsilon\rho )$---the
series on the RHS of Eq. \eqref{series} is termwise bounded uniformly in $c\geq 0$ by the following absolutely convergent series (cf.  Eq. \eqref{bounded-convergence}):
\begin{equation*}
 \sum_{ \nu\geq 0}  | \hat{a}^{\text{bc}}_\nu (\xi ; g^{(c)})  e^{\langle \xi -\nu ,x  \rangle } |
\leq A   | e^{\langle \xi ,x\rangle } | \sum_{ \nu\geq 0}   e^{- \langle \nu ,x-\varepsilon\rho \rangle }< +\infty .
\end{equation*}
\end{proof}

\subsection{Regularity domain}
The  locally uniform convergence of the regularized Harish-Chandra series guarantees that $\Delta_\text{U}(\xi) \phi_\xi^{\text{r}}(x;g)$ is analytic for $(\xi ,x, g)\in \text{U}\times\mathbb{A}^{\text{r}}\times \mathbb{C}^3$.  Moreover, it was shown by Opdam (cf. 
 \cite[Cor. 2.2,\, Cor. 2.10]{opd:root}, \cite[Prp. 4.2.5]{hec-sch:harmonic} and \cite[Lem. 6.5]{opd:lecture}) that the series
 $ \phi_\xi^{\text{r}}(x;g)$ is analytic for $(\xi ,x, g)\in \mathbb{C}^n_{\text{reg},+}\times\mathbb{A}^{\text{r}}\times \mathbb{C}^3$ when $\text{r}=\text{bc}$. In other words,  in this situation
 $\Delta_\text{U}(\xi) \phi_\xi^{\text{r}}(x;g)$ vanishes  for  $(\xi ,x, g)\in (\text{U}\cap \text{H}_\mu )\times\mathbb{A}^{\text{r}}\times \mathbb{C}^3$ if
the hyperplane $\text{H}_\mu $, $\mu >0$ does not belong to $ \mathbb{C}^n\setminus \mathbb{C}^n_{\text{reg},+}$.
 By Proposition \ref{HC-confluence:prp}, the vanishing property in question persists when $\text{r}=\text{t}$ and $\text{r}=\text{cs}$ (where we use the analyticity in $x$
 to extend from $\mathbb{A}^{\text{bc}}$ to $\mathbb{A}^{\text{r}}$). 
As the bounded domain $\text{U}\subset\mathbb{C}^n$ was chosen arbitrarily, the upshot is that also for $\text{r}=\text{t}$ and $\text{r}=\text{cs}$  the series
 $ \phi_\xi^{\text{r}}(x;g)$ must be analytic for $(\xi ,x, g)\in \mathbb{C}^n_{\text{reg},+}\times\mathbb{A}^{\text{r}}\times \mathbb{C}^3$,
with at most a simple pole arising along $\text{H}_\mu \times\mathbb{A}^{\text{r}}\times \mathbb{C}^3$ if the hyperplane $\text{H}_\mu $, $\mu >0$ belongs to $ \mathbb{C}^n\setminus \mathbb{C}^n_{\text{reg},+}$.

\section{Proof of Theorem \ref{WFD:thm}}\label{appB}
Starting from the difference equations \eqref{DE:a}--\eqref{DE:f} for the case $\text{r}=\text{bc}$ from \cite[Thm. 2]{die-ems:difference}, the cases $\text{r}=\text{t}$ and $\text{r}=\text{cs}$ are recovered via the following $\text{bc}\to \text{t}$ and $\text{bc}\to \text{cs}$ confluences
that extend Proposition \ref{HC-confluence:prp} to the level of the $n$-particle wave functions
$ \Phi_\xi^{\emph{r}}(x; g )$ \eqref{wave-function:a}--\eqref{wave-function:d} (cf. \cite{shi:limit,osh-shi:heckman-opdam}).

\begin{proposition}[Partial Confluent Limits of the Wave Function]\label{WF-confluence:prp}
Let $(\xi,x,g) \in \mathbb{C}_{\emph{reg}}^n\times\mathbb{A}^{\emph{bc}}\times\mathbb{C}^3$.

(i) The translation $x\to x+c\rho_M$ \eqref{translate-t} and the parameter substitution $g\to g^{(c)}$ \eqref{rescale-t} give rise to the  $\emph{bc}\to \emph{t}$ confluence
\begin{subequations}
\begin{equation}
\lim_{c\to +\infty}    \gamma_M (g^{(c)})  \Phi_\xi^{\emph{bc}}(x+c\rho_M; g^{(c)})=\Phi^{\emph{t}}_\xi (x;g) ,
\end{equation}
where
\begin{equation*}
  \gamma_M (g) :=   \left( \Gamma (g_M)    \right)^{n(n-1)}  .
\end{equation*}

(ii) The translation $x\to x+c\rho_L$ \eqref{translate-cs} and the parameter substitution $g\to g^{(c)}$ \eqref{rescale-cs} give rise to the $\emph{bc}\to \emph{cs}$  confluence
\begin{equation}
\lim_{c\to +\infty}    \gamma_L (g^{(c)})\Phi_\xi^{\emph{bc}}(x+c\rho_L; g^{(c)})= \Phi^{\emph{cs}}_\xi (x;g) ,
\end{equation}
where
\begin{equation*}
  \gamma_L (g) :=    \left(  \frac{\Gamma (g_S)\Gamma (\frac{1}{2}g_S+g_L)}{\Gamma (\frac{1}{2} g_S)\Gamma (\frac{1}{2}+\frac{1}{2}g_S)} \right)^n .
\end{equation*}
\end{subequations}
\end{proposition}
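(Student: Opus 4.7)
The strategy is to reduce to the already-established Harish-Chandra confluence of Proposition \ref{HC-confluence:prp} by expanding
\[
\gamma_K(g^{(c)})\, \Phi_\xi^{\text{bc}}(x+c\rho_K; g^{(c)})=\sum_{\text{w}\in W}\bigl[\gamma_K(g^{(c)})\, C^{\text{bc}}(\text{w}\xi;g^{(c)})\, e^{c\langle \text{w}\xi,\rho_K\rangle}\bigr]\cdot \bigl[e^{-c\langle \text{w}\xi,\rho_K\rangle}\phi^{\text{bc}}_{\text{w}\xi}(x+c\rho_K;g^{(c)})\bigr],
\]
where $K=M$ for part (i) and $K=L$ for part (ii). Since $\xi\in\mathbb{C}_{\text{reg}}^n$ is chosen in advance, one may pick a bounded domain $\text{U}\subset\mathbb{C}^n$ with $W\xi\subset \text{U}\subset \mathbb{C}_{\text{reg}}^n$ so that $\Delta_\text{U}(\text{w}\xi)\neq 0$ for every $\text{w}\in W$; Proposition \ref{HC-confluence:prp} then yields, termwise, $\phi^{\text{bc}}_{\text{w}\xi}(x+c\rho_K;g^{(c)})e^{-c\langle \text{w}\xi,\rho_K\rangle}\to \phi^{\text{r}}_{\text{w}\xi}(x;g)$. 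Because $W$ is finite, it suffices to show that the bracketed prefactor tends to $C^{\text{r}}(\text{w}\xi;g)$ for each $\text{w}$.

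The main computational step is the asymptotic analysis of the $c$-function under the parameter substitution, carried out via Stirling's formula in the guise $\Gamma(a+z)/\Gamma(a)\sim a^z$ as $a\to+\infty$. In case (i), the factor $c_w$ is identical for $\text{bc}$ and $\text{t}$, so only the products $c_v(\text{w}\xi_j\pm \text{w}\xi_k)=\Gamma(\cdot)/\Gamma(g_M^{(c)}+\cdot)$ need to be treated. Using $g_M^{(c)}\sim e^{c/2}$ one finds
\[
\prod_{j<k} c_v(\text{w}\xi_j+\text{w}\xi_k)\, c_v(\text{w}\xi_j-\text{w}\xi_k)\ \sim\ \frac{\prod_{j<k}\Gamma(\text{w}\xi_j+\text{w}\xi_k)\Gamma(\text{w}\xi_j-\text{w}\xi_k)}{\Gamma(g_M^{(c)})^{n(n-1)}\,\bigl(g_M^{(c)}\bigr)^{2\langle \text{w}\xi,\rho_M\rangle}},
\]
since $\sum_{j<k} 2\text{w}\xi_j = 2\langle \text{w}\xi,\rho_M\rangle$. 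The power of $g_M^{(c)}$ exactly cancels $e^{c\langle \text{w}\xi,\rho_M\rangle}$, while the overall normalization $\gamma_M(g^{(c)})=\Gamma(g_M^{(c)})^{n(n-1)}$ absorbs the remaining Gamma-function denominator, producing $C^{\text{t}}(\text{w}\xi;g)$ in the limit.

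For case (ii) only the one-variable factor $c_w$ requires attention, because $g_M$ is unchanged. Substituting $g_S^{(c)}=2g_S$ and using $g_L^{(c)}\sim e^c/4$, Stirling's formula applied to $\Gamma(g_S+g_L^{(c)}+z)$ and to $\Gamma(g_S+g_L^{(c)})$ yields cancellation of the translation factor $e^{c\langle \text{w}\xi,\rho_L\rangle}$ by $(g_L^{(c)})^{-\langle \text{w}\xi,\rho_L\rangle}$; the remaining Gamma-function ratio simplifies, after one application of the duplication identity $\Gamma(2a+2z)=\pi^{-1/2}2^{2a+2z-1}\Gamma(a+z)\Gamma(a+z+\tfrac12)$ to both $\Gamma(2g_S+2\text{w}\xi_j)$ and (through $\gamma_L$) $\Gamma(2g_S)$, to precisely $\Gamma(2\text{w}\xi_j)/\Gamma(\tfrac12+g_S+\text{w}\xi_j)=c_w^{\text{cs}}(\text{w}\xi_j;g)$.

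I expect the routine but delicate step to be the justification that the termwise limits may be carried out inside the finite sum and that the Stirling estimates are valid uniformly for $\text{w}\xi$ in a bounded set away from the poles of the relevant Gamma factors; this is handled by working on the regularity domain $\mathbb{C}^n_{\text{reg}}$ and by the pointwise nature of the assertion. The genuinely non-routine input, namely the confluence of the Harish-Chandra series itself, is already contained in Proposition \ref{HC-confluence:prp}, so once the $c$-function asymptotics above are in place the result follows by term-by-term passage to the limit.
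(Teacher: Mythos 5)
Your proposal is correct and follows essentially the same route as the paper: the paper's proof likewise decomposes $\Phi^{\text{bc}}_\xi$ over the finite group $W$, invokes Proposition \ref{HC-confluence:prp} for the Harish-Chandra factors, and establishes the $c$-function limits $\gamma_K(g^{(c)})\,e^{c\langle \xi,\rho_K\rangle}C^{\text{bc}}(\xi;g^{(c)})\to C^{\text{r}}(\xi;g)$ from the asymptotics $\Gamma(a+z)/\Gamma(a)\sim a^{z}$ together with the duplication formula. Your Stirling computations (the exponent bookkeeping $\sum_{j<k}2\xi_j=2\langle\xi,\rho_M\rangle$ in case (i), and the cancellation against $\gamma_L$ via duplication in case (ii)) match the asymptotics the paper declares ``immediate.''
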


\begin{proof}
From the  well-known asymptotics
$
 \lim_{c\to +\infty}   \frac{c^z \Gamma (c)  }{\Gamma (z+c)} =1
$
and the duplication formula $
 \Gamma (2z)=\pi^{-\frac{1}{2}} 2^{2z-1}\Gamma (z)\Gamma \textstyle{(\frac{1}{2}}+z)
$
($2z\not\in\mathbb{Z}_{\leq 0}$), it is immediate that upon substituting $g\to g^{(c)}$ \eqref{rescale-t}:
\begin{subequations}
\begin{equation}\label{c-function-lim:a}
\lim_{c\to +\infty}   \gamma_M (g^{(c)}) \,e^{c\langle \xi ,\rho_M\rangle }  C^{\text{bc}}(\xi;g^{(c)}) =C^{\text{t}} (\xi;g) ,
\end{equation}
and upon substituting $g\to g^{(c)}$ \eqref{rescale-cs}:
\begin{equation}\label{c-function-lim:b}
\lim_{c\to +\infty}   \gamma_L (g^{(c)}) \,e^{c\langle \xi ,\rho_L\rangle }  C^{\text{bc}}(\xi;g^{(c)}) =C^{\text{cs}} (\xi;g) .
\end{equation}
\end{subequations}
The asserted limit transitions now follow directly from the explicit formulas in Eqs. \eqref{wave-function:a}--\eqref{wave-function:d}, with aid of Proposition \ref{HC-confluence:prp} and Eqs. \eqref{c-function-lim:a}, \eqref{c-function-lim:b}.
 \end{proof}

(i) To retrieve the difference equations for $\text{r}=\text{t}$, one
effectuates the translation \eqref{translate-t} and the parameter substitution  \eqref{rescale-t} in
the $\ell$th difference equation for $\text{r}=\text{bc}$. Upon multiplication by an overall 
scaling factor of the form $e^{-\frac{c}{2}\ell (2n-1-\ell)} \gamma_M(g^{(c)})$, this produces the corresponding difference equation of type $\text{r}=\text{t}$ for $c\to +\infty$.
Here one uses Propostion \ref{WF-confluence:prp} and the elementary limits
\begin{align*}
\lim_{c\to +\infty}  e^{-\frac{c}{2}|J| (2n-1-|J|)} V^{\text{bc}}_{\varepsilon J}(\xi ;g^{(c)})&=V^{\text{t}}_{\varepsilon J}(\xi ;g ),  \\
\lim_{c\to +\infty}  e^{-\frac{c}{2} p (2|K|-1-p)}  U^{\text{bc}}_{K,p}(\xi ;g^{(c)}) &= U^{\text{t}}_{K,p}(\xi ;g ) ,\\
\lim_{c\to +\infty}  e^{-\frac{c}{2}\ell (2n-1-\ell)}   E^{\text{bc}}_\ell (x+c\rho_M)&=  E^{\text{t}}_\ell (x),
\end{align*}
with $|K|=n-|J|$, $p=\ell-|J|$.

(ii) To retrieve the difference equations for $\text{r}=\text{cs}$, one
effectuates the translation \eqref{translate-cs} and the parameter substitution  \eqref{rescale-cs} in
the $\ell$th difference equation for $\text{r}=\text{bc}$. Upon multiplication by an overall 
scaling factor of the form $ e^{-c\ell} \gamma_L(g^{(c)})$, this produces the corresponding difference equation of type $\text{r}=\text{cs}$ for $c\to +\infty$.
Here one uses Propostion \ref{WF-confluence:prp} and the elementary limits
\begin{align*}
\lim_{c\to +\infty}  e^{-c |J|} V^{\text{bc}}_{\varepsilon J}(\xi ;g^{(c)})&=V^{\text{cs}}_{\varepsilon J}(\xi ;g),  \\
\lim_{c\to +\infty}  e^{-cp}  U^{\text{bc}}_{K,p}(\xi ;g^{(c)}) &= U^{\text{cs}}_{K,p}(\xi ;g) ,\\
 \lim_{c\to +\infty}  e^{-c\ell}   E^{\text{bc}}_\ell (x+c\rho_L)&=  E^{\text{cs}}_\ell (x) ,
\end{align*}
with $|K|=n-|J|$, $p=\ell-|J|$.

In both cases one initially recovers the difference equations for the restricted domain of $(\xi,x,g) \in \mathbb{C}^n_{\text{reg}}\times\mathbb{A}^{\text{bc}}\times\mathbb{C}^3$. The extensions to $(\xi,x,g)$ in $\mathbb{C}^n_{\text{reg}}\times\mathbb{A}^{\text{t}}\times\mathbb{C}^3$ and
$\mathbb{C}^n_{\text{reg}}\times\mathbb{A}^{\text{cs}}\times\mathbb{C}^3$ are plain by analytic continuation.

\bibliographystyle{amsplain}

\end{document}